\begin{document}
\title{Bitcoin Security under Temporary Dishonest Majority}
%
%
\author{Georgia Avarikioti \and Lukas K\"appeli \and Yuyi Wang \and Roger Wattenhofer}
\authorrunning{G. Avarikioti et al.}
%
\institute{ETH Zurich, Switzerland \\
\email{\{zetavar,yuwang,wattenhofer\}@ethz.ch,\\ lukas.kaeppeli@hotmail.com}}
\maketitle              
%
%
\begin{abstract}

We prove Bitcoin is secure under temporary dishonest majority.
We assume the adversary can corrupt a specific fraction of parties and also introduce crash failures, i.e., some honest participants are offline during the execution of the protocol. We demand a majority of honest online participants on expectation. We explore three different models and present the requirements for proving Bitcoin's security in all of them: we first examine a synchronous model, then extend to a bounded delay model and last we consider a synchronous model that allows message losses.

\keywords{Bitcoin \and Security \and Dishonest Majority \and Offline Players \and Sleepy Model}
\end{abstract}

%
%
\section{Introduction}

Bitcoin~\cite{Nak08} is the predominant cryptocurrency today. Nevertheless, our  understanding of Bitcoin' s correctness is limited. Only relatively recently, there have been attempts to formally capture Bitcoin' s security properties. In a seminal work, Garay et al.~\cite{GKL15} proposed a formal framework (the ``backbone protocol'') to describe the Bitcoin system. They defined security properties for the backbone protocol and proved these both in the synchronous and bounded-delay model. 


Our work extends the work of Garay et al.~\cite{GKL15} in several dimensions. First, in contrast to our model, \cite{GKL15} assumed a constant honest majority of the participants. However, the Bitcoin protocol has been proven to be more fault-tolerant and able to allow for a majority of dishonest players, as long as this dishonest majority is only temporary. Specifically, in 2014 there was a majority takeover (approximately 54\% of the network) by the mining pool GHash.io. The cost to perform such attacks have been studied in \cite{bonneau2018hostile} and \url{https://www.crypto51.app/}. In this work, we extend the original work of Garay et al.~\cite{GKL15} to capture these attacks, by allowing a temporary dishonest majority. We provide a formal analysis and investigate under which circumstances Bitcoin is secure when the honest majority holds only on expectation.

Second, motivated by a model of Pass and Shi \cite{PS17}, we not only have honest (``alert'') or dishonest (``corrupted'') nodes. Instead, there is a third group of nodes that are currently not able to follow the protocol. We call them ``sleepy'', which really is a euphemism for  nodes that are basically offline, eclipsed from the action, for instance by a denial of service attack. Understanding this trade-off between corrupted and sleepy nodes gives us a hint whether a dishonest attacker should rather invest in more mining power (to get more corrupted players) or in a distributed denial of service architecture (to get more sleepy players).

Third, we introduce a parameter $c$ that upper bounds the mining power of the adversary over the mining power of ``alert'' nodes. This is not necessary for the security analysis. However, as showed in \cite{SSZ16}, if the adversary follows a selfish mining strategy, he can gain a higher fraction of blocks (rewards) compared to his fraction of the mining power. Hence, parameter $c$ allows us to clearly capture the correlation between this parameter and the advantage of the adversary when he deviates from the honest protocol execution.

Fourth, we study network delays since they significantly affect the performance and security. By extending our synchronous model to the a semi-synchro\-nous model, we show that the upper bound on sleepy parties heavily depends on the maximum allowed message delay.
	
Finally, we extend our analysis to a synchronous model, where we allow message losses. This is inspired by an idea described in \cite{HKZG15eclipse}, where the adversary may perform an eclipse attack \cite{SNDW06,SM02} on some victims  which enables the adversary to control their view of the blockchain. We  show security under the assumption that the adversary can eclipse a certain number of players, depending on the number of corrupted players.

	The omitted theorems, lemmas and proofs can be found in the full version of the paper \cite{avarikioti2019bitcoin}.

\section{The Model}

    We adapt the model originally introduced by Garay et al.~\cite{GKL15} to prove the security of the Backbone protocol. We initially present all the components of a general model and then parametrize the model to capture the three different models under which we later prove that the backbone protocol is secure. 
    
		\subsection{The Execution}

			We assume a fixed set of $n$ parties, executing the Bitcoin backbone protocol. Each party can either be \textit{corrupted}, \textit{sleepy} or \textit{alert};  \textit{sleepy} is an offline honest node and \textit{alert} an honest node that is actively participating in the protocol.

			\subsubsection{Involved programs.}		
			All programs are modeled as polynomially-bounded interactive Turing machines (ITM) that have communication, input and output tapes. An ITM instance (ITI) is an instance of an ITM running a certain program or protocol. 
			Let the ITM $\mathcal{Z}$ denote the environment program that leads the execution of the Backbone protocol. Therefore $\mathcal{Z}$ can spawn multiple ITI's running the protocol. These instances are a fixed set of $n$ parties, denoted by $P_1, \dots, P_n$. 
			The control program $C$, which is also an ITM, controls the spawning of these new ITI's and the communication between them. Further, $C$ forces the environment $\mathcal{Z}$ to initially spawn an adversary $\mathcal{A}$. The environment will then activate each party in a round-robin way, starting with $P_1$. This is done by writing to their input tape. Each time, a corrupted party gets activated, $\mathcal{A}$ is activated instead. The adversary may then send messages $(\mathtt{Corrupt}, P_i)$ to the control program and $C$ will register the party $P_i$ as corrupted, as long as there are less than $t < n$ parties corrupted. Further, the adversary can set each party asleep by sending a message $(\mathtt{sleep}, P_i)$ to the control program. The control program $C$ will set the party $P_i$ asleep for the next round with probability $s$, without informing $\mathcal{A}$ if the instruction was successful or not.
			
			Each party $P_i$ has access to two ideal functionalities, the ``random oracle'' and the ``diffusion channel'', which are also modelled as ITM's. These functionalities, defined below, are used as subroutines in the Backbone protocol.
			
			\subsubsection{Round.}
				A round of the protocol execution is a sequence of actions, performed by the different ITI's. In our setting, a round starts with the activation of the party $P_1$, which then performs the protocol-specific steps. By calling the below defined diffuse functionality, $P_1$ has finished it's actions for the current round and $\mathcal{Z}$ will activate $P_2$. If the party $P_i$ is corrupted, $\mathcal{A}$ will be activated and if $P_i$ is asleep, $P_{i+1}$ gets activated instead. The round ends after $P_n$ has finished.
				Rounds are ordered and therefore enumerated, starting from $1$.

			\subsubsection{Views.}
				Let us formally define the view of a party $P$. The only ``external'' input to the protocol is the security parameter $\kappa$. Therefore, we can consider $\kappa$ to be constant over all rounds of the execution and we can exclude it from the random variable describing the view of a party. 
				We denote by the random variable $VIEW_{\mathcal{A}, \mathcal{Z}}^{P, t, n}$ the view of a party $P$ after the execution of the Bitcoin backbone protocol in an environment $\mathcal{Z}$ and with adversary $\mathcal{A}$.
				The complete view over all $n$ parties is the concatenation of their views, denoted by the random variable $VIEW_{\mathcal{A}, \mathcal{Z}}^{t, n}$.

		\subsubsection{Communication and ``hashing power''.}
		\label{section:functionalities-section}
			The two ideal functionalities, which are accessible by the parties, model the communication between them and the way of calculating values of a hash function $H(\cdot): \{0,1\}^* \rightarrow \{0,1\}^{\kappa} $ concurrently. 
			
			\subsubsection{The random oracle functionality.}
				The random oracle (RO) provides two functions, a calculation and a verification function. Each party is given a number of $q$ calculation queries and unlimited verification queries per round. Thus, an adversary with $t$ corrupted parties may query the random oracle for $t \cdot q$ calculation queries per round.
				Upon receiving a calculation query with some value $x$ by a party $P_i$, the random oracle checks, whether $x$ was already queried before. If not, the RO selects randomly $y \in \{0,1\}^{\kappa}$ and returns it. Further, the RO maintains a table and adds the pair $(x,y)$ into this table. If $x$ was already queried before, the RO searches in the table for the corresponding pair and returns the value $y$ from it. 
				It's easy to see that a verification query now only returns true/valid, if such a pair exists in the table of the RO.
				Note that the RO can maintain tables for different hash functions and can be used for all hash functions we need.
					
			\subsubsection{The diffuse functionality.}
				The diffuse functionality models the communication between the parties and thus maintains a $\mathit{RECEIVE()}$ string for each party $P_i$. Note that this is not the same as the previously mentioned input tape. Each party can read the content of its $\mathit{RECEIVE()}$ string at any time. The message delay is denoted by $\Delta$, where $\Delta = 0$ corresponds to a synchronous setting.
				
				The diffuse functionality has a $round$ variable, which is initially set to 1. Each party $P_i$ can send a message $m$, possibly empty, to the functionality, which then marks $P_i$ as complete for the current round. We allow $\mathcal{A}$ to read all the messages that are sent by some $P_i$, without modifying, dropping or delaying it. When all parties and the adversary are marked as complete, the functionality writes all messages that are $\Delta$ rounds old to the $\mathit{RECEIVE()}$ strings of either only the alert or all parties. We denote by $B$ a Boolean function that indicates exactly that; if $B=0$ the diffuse functionality  writes all messages to the $\mathit{RECEIVE()}$ strings of the alert parties, while if $B=1$ the diffuse functionality  writes all messages to the $\mathit{RECEIVE()}$ strings of all parties. Each party can read the received messages in the next round being alert. 
				At the end, $round$ is incremented.
				
				Note that in the case where $B=1$, if a party is asleep at a round, it automatically gets marked as complete for this round. Further, upon waking up, it can read all the messages that were written to its $\mathit{RECEIVE()}$ string while it was asleep.

            \subsubsection{Successful queries.}
		        A query to the RO oracle is successful, if the returned value $y < T$, where $T$ is the difficulty parameter for the PoW function. The party, which have issued the query will then create a new valid block and may distribute it by the diffuse functionality. We denote the success probability of a single query by $p = Pr[y < T] = \frac{T}{2^\kappa}$. Note that in Bitcoin, the difficulty parameter is adjusted such that the block generation time is approximately ten minutes.

		\subsection{Sleepy, Alert and Corrupted}
			For each round $i$, we have at most $t$ corrupted and $n_{honest, i} = n - t$ honest parties. Furthermore, the number of honest parties are divided to  alert and sleepy parties, $n_{honest, i} = n_{alert, i} + n_{sleepy, i}$. We assume without loss of generality that no corrupted party is asleep, since we only upper-bound the power of the adversary.
			Since $n_{alert, i}$ and $n_{sleepy, i}$ are random variables, we can also use their expected value. The expected value is constant over different rounds, thus we will refer to them as $E[n_{alert}]$ and $E[n_{sleepy}]$. Since each honest party is independently set to sleep with probability $s$ and thus the random variable $n_{sleepy, i}$ is binomially distributed with parameters $(n-t)$ and $s$. Accordingly, $n_{alert, i}$ is also binomially distributed with parameters $(n-t)$ and $(1-s)$. Hence,	$E[n_{sleepy}] = s \cdot (n-t)$ and  $E[n_{alert}] = (1-s) \cdot (n-t)$.

    \subsection{Parametrized Model}
        Let $M(q, \Delta, B)$ be the model, defined in this section. In the following sections, we will look at three instantiations of this model. First of all, we are going to analyze the model $M(q,0,1)$, which corresponds to a synchronous setting, in which each party has the ability to make $q$ queries to the random oracle and receives every message, even if the party is asleep. 
        Then, we extend these results to the bounded delay model, which corresponds to $M(1,\Delta,1)$. As before, every party will always receive messages, but we restrict $q$ to be $1$. In the last section, we analyze the model $M(q,0,0)$, which corresponds to the synchronous model, but we do not allow the diffuse functionality to write messages on the $RECEIVE()$ tapes of sleepy parties.

    \subsection{Properties}	
	    In order to prove the security of the Bitcoin backbone protocol, we are going to analyze three different properties, following the analysis of \cite{GKL15}. These properties are defined as predicates over $VIEW_{\mathcal{A}, \mathcal{Z}}^{t, n}$, which will hold for all polynomially bounded environments $\mathcal{Z}$ and adversaries $\mathcal{A}$ with high probability.
				
		\begin{definition}
		\label{definition:predicate-definition}
			Given a predicate $Q$ and a bound $q, t, n \in \mathbb{N}$ with $t < n$, we say that the Bitcoin backbone protocol satisfies the property $Q$ in the model $M(q, \Delta, B)$ for $n$ parties, assuming the number of corruptions is bounded by $t$, provided that for all polynomial-time $\mathcal{Z}, \mathcal{A}$, the probability that $Q(VIEW_{\mathcal{A}, \mathcal{Z}}^{t, n})$ is false is negligible in $\kappa$.
		\end{definition}
				
		The following two Definitions concern the liveness and eventual consistency properties of the Backbone protocol. We are using the notation of \cite{GKL15}: We denote a chain $C$, where the last $k$ blocks are removed, by $C^{\lceil k}$. Further, $C_1 \preceq C_2$ denotes that $C_1$ is a prefix of $C_2$. 
				
		\begin{definition}
		\label{definition:chain-growth-property-definition}
			The chain growth property $\mathit{Q}_{cg}$ with parameters $\tau \in \mathbb{R}$ and $s \in \mathbb{N}$ states that for any honest party $P$ with chain $\mathit{C}$ in $VIEW^{t,n}_{ \mathcal{A}, \mathcal{Z}}$, it holds that for any $s+1$ rounds, there are at least $\tau \cdot s$ blocks added to the chain of $P$. 
			\footnote{The Chain-Growth Property in \cite{GKL15} is defined slightly different: \textit{\dots it holds that for any $s$ rounds, there are at least $\tau \cdot s$ blocks added to the chain of $P$}. Considering the proof for Theorem 1 (of \cite{avarikioti2019bitcoin}), one can see, why we use $s+1$ instead of $s$. It follows by the fact that the sum in Lemma 13 (of \cite{avarikioti2019bitcoin}) only goes from $i=r$ to $s-1$ and not to $s$.}
		\end{definition}
		
		\begin{definition}
		\label{definition:common-prefix-definition}
			The common-prefix property $Q_{cp}$ with parameter $k \in \mathbb{N}$ states that for any pair of honest players $P_1, P_2$ adopting the chains $C_1, C_2$ at rounds $r_1 \leq r_2$ in $VIEW^{t,n}_{\mathcal{A}, \mathcal{Z}}$ respectively, it holds that $C_1^{\lceil k}\preceq C_2$.
		\end{definition}
			
		In order to argue about the number of adversarial blocks in a chain, we will use the chain quality property, as defined below:	
			
        \begin{definition}
		\label{definition:chain-quality-definition}
			The chain quality property $Q_{cq}$ with parameters $\mu \in \mathbb{R}$ and $l \in \mathbb{N}$ states that for any honest party $P$ with chain $C$ in $VIEW^{t,n}_{\mathcal{A}, \mathcal{Z}}$, it holds that for any $\ell$ consecutive blocks of $C$ the ratio of adversarial blocks is at most $\mu$.
		\end{definition}
		
		The following two definitions formalize typical executions of the Backbone protocol. Both of them are related to the hash functions, used for implementing the Backbone Protocol. Further, the parameters $\epsilon$ and $\eta$ are introduced. Throughout the paper, $\epsilon \in (0,1)$ refers to the quality of concentration of random variables in typical executions and $\eta$ corresponds to the parameter, determining block to round translation.
		        
		\begin{definition}[\cite{GKL15}, Definition 8]
			An \textit{insertion} occurs when, given a chain $\mathcal{C}$ with two consecutive blocks $B$ and $B'$, a block $B^*$ is such that $B, B^*, B'$ form three consecutive blocks of a valid chain. A \textit{copy} occurs if the same block exists in two different positions. A \textit{prediction} occurs when a block extends one which was computed at a later round.
		\end{definition}

		\begin{definition}[\cite{GKL15}, Definition 9]
		\label{definition:typical-execution-definition}
			\textit{(Typical execution)}. An execution is $(\epsilon, \eta)-typical$ if, for any set $S$ of consecutive rounds with $|S| \geq \eta\kappa$ and any random variable $X(S)$, the following holds:
			\begin{itemize}
				\item[a)] $(1-\epsilon) E[X(S)] < X(S) < (1+\epsilon) E[X(S)]$
				\item[b)]No insertions, no copies and no predictions occurred.
			\end{itemize}
		\end{definition}
			
		\begin{lemma}
		\label{theorem:typical-execution-theorem}
			An execution is typical with probability $1-e^{-\Omega(\kappa)}$. 
		\end{lemma}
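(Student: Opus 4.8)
The plan is to bound the probability of the complementary event (the execution is \emph{not} typical) and show it is $e^{-\Omega(\kappa)}$. Since the defining conditions are of two different natures, I would split the analysis into a concentration part, handling condition (a), and a collision part, handling condition (b), bound each separately, and finish with a union bound: if $\Pr[\text{(a) fails}]$ and $\Pr[\text{(b) fails}]$ are each $e^{-\Omega(\kappa)}$, then so is their sum, whence the execution is typical with probability $1-e^{-\Omega(\kappa)}$.

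For condition (a), I would first make precise that ``any random variable $X(S)$'' really refers to the fixed, constant-size family of counting variables driving the analysis (the number of successful queries by alert parties over the rounds of $S$, the number of \emph{uniquely} successful rounds, and the number of successful adversarial queries). Each such $X(S)$ is a sum of per-query Bernoulli indicators, each equal to $1$ with probability $p=\frac{T}{2^\kappa}$ by the random oracle model. The key quantitative observation is that $E[X(S)]$ is linear in $|S|$ (each round contributes at least a constant expected number of queries, using $E[n_{alert}]=(1-s)(n-t)$ and the per-round query budget $q$), so the threshold $|S|\ge\eta\kappa$ forces $E[X(S)]=\Omega(\kappa)$. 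Applying a multiplicative Chernoff-type bound then gives $\Pr\bigl[|X(S)-E[X(S)]|>\epsilon E[X(S)]\bigr]\le 2e^{-\epsilon^2 E[X(S)]/3}=e^{-\Omega(\kappa)}$ for the fixed constant $\epsilon$. Since the environment runs for polynomially many rounds, there are only polynomially many windows $S$ and constantly many variable types; a union bound over all of them multiplies the failure probability by a polynomial factor, which is absorbed into the hidden constant of $e^{-\Omega(\kappa)}$.

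For condition (b), I would argue that each of an insertion, a copy, and a prediction entails a collision in the random oracle: a copy is literally two distinct queries producing the same $\kappa$-bit string, while an insertion or a prediction forces a block to hash to a value that matches a pointer fixed at a different (in the prediction case, later) round, again a collision between independent uniform outputs. Letting $Q=\mathrm{poly}(\kappa)$ be the total number of calculation queries in the execution, the probability that any two of the $Q$ uniform $\kappa$-bit outputs coincide is at most $\binom{Q}{2}2^{-\kappa}=e^{-\Omega(\kappa)}$; a union bound over the three event types preserves this.

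The hardest part will be making the concentration argument rigorous in the presence of adaptivity and the sleep coin flips. The per-query indicators are not i.i.d.: the adversary schedules its $tq$ queries based on past random oracle answers, and whether an honest party is alert in round $i$ is itself a Bernoulli$(1-s)$ random variable, so $n_{alert,i}$ fluctuates. I would therefore replace the plain Chernoff bound by a martingale (Azuma--Hoeffding) argument on the filtration generated by the successive queries, exploiting that each query changes any $X(S)$ by at most $1$ and that at most $nq$ queries occur per round, so the bounded-difference hypothesis holds with increment $1$ and the exponent still scales with $E[X(S)]=\Omega(\kappa)$. Verifying that the conditional success probability of each query is exactly $p$ irrespective of the adversary's adaptive choices (a direct consequence of the random oracle freshly sampling each new input) is what legitimizes treating these as a supermartingale/submartingale, and is the technical crux of the whole lemma.
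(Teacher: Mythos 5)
Your proposal is correct and takes essentially the same approach as the paper: condition (a) via a Chernoff-type concentration bound exploiting $E[X(S)] = \Omega(|S|) = \Omega(\kappa)$ for $|S| \ge \eta\kappa$, and condition (b) by reducing insertions, copies, and predictions to collisions in the hash function, which occur with probability $e^{-\Omega(\kappa)}$. The paper's proof is only a two-sentence sketch, so your union bound over the polynomially many windows and the Azuma--Hoeffding handling of adaptive adversarial queries are more careful implementations of the same Chernoff step rather than a different route.
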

			
		\begin{proof} 
		    To prove a), we can simply use a Chernoff bound by arguing that $E[X(S)]$ is in $\Omega(|S|)$. The proof for b) is equivalent to \cite{GKL15}, by reducing these events to a collision in one of the hash functions of the Bitcoin backbone protocol. Such collisions only happen with probability $e^{-\Omega(\kappa)}$.\qed
				
		\end{proof}

%
%
    
\section{The $q$-bounded Synchronous Model without Message Loss $M(q,0,1)$}
        In this section, we analyze the Bitcoin backbone protocol in the previously defined model, instantiated as $M(q,0,1)$. This corresponds to the $q$-bounded synchronous setting in \cite{GKL15}.
        First, we define the success probabilities for the alert and corrupted parties, which are used to prove the relations between them. At the end, we use these results to show the properties of chain growth, common prefix and chain quality. 
        
        Following the definition in \cite{GKL15}, let a successful round be a round in which at least one honest party solves a PoW. The random variable $X_i$ indicates successful rounds $i$ by setting $X_i = 1$ and $X_i = 0$ otherwise. Further, we denote for a set of rounds $S$: $X(S) = \sum_{i \in S} X_i$. We note that if no party is asleep, we have $E[X_i] = Pr[X_i = 1] = 1- (1-p)^{q(n-t)}$.
		
	\newpage	 
	    \begin{restatable}{lemma}{successfulrounds}
		\label{lemma:successful-rounds-lemma}
		    It holds that $\frac{pqE[n_{alert}]}{1 + pqE[n_{alert}]}  \leq E[X_i] \leq pqE[n_{alert}]$.
		\end{restatable}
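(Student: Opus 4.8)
The plan is to reduce $E[X_i]=\Pr[X_i=1]$ to a clean two-sided estimate of the expression $1-(1-p)^N$. First I would compute the complementary probability directly: a single honest party fails to produce any successful query in round $i$ exactly when it is asleep (probability $s$) or alert but unlucky on all $q$ queries (probability $(1-s)(1-p)^q$). As the $n-t$ honest parties act independently,
\begin{equation*}
  \Pr[X_i=0]=\bigl(s+(1-s)(1-p)^q\bigr)^{\,n-t}.
\end{equation*}
Equivalently, conditioning on the number of alert parties $a$ gives $\Pr[X_i=0\mid n_{alert,i}=a]=(1-p)^{qa}$, so that $E[X_i]=1-E\bigl[(1-p)^{q\,n_{alert,i}}\bigr]$. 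Either way the problem becomes one of bounding $1-(1-p)^{N}$ for the effective alert query count $N=q\,n_{alert}$.

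The arithmetic core is the elementary inequality
\begin{equation*}
  \frac{pN}{1+pN}\;\le\;1-(1-p)^{N}\;\le\;pN,\qquad N\ge 0 .
\end{equation*}
The right inequality is Bernoulli's inequality, $(1-p)^{N}\ge 1-pN$. For the left one I would chain $(1-p)^{N}\le e^{-pN}$ with $e^{pN}\ge 1+pN$ to obtain $(1-p)^{N}\le 1/(1+pN)$, whence $1-(1-p)^{N}\ge pN/(1+pN)$. Substituting $N=q\,E[n_{alert}]$ and recalling $E[n_{alert}]=(1-s)(n-t)$ then yields precisely the bounds in the statement.

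The upper bound carries over to the random alert count with no extra effort: $X_i$ is at most the number of successful alert queries in round $i$, so by linearity $E[X_i]\le p\cdot q\cdot E[n_{alert}]$ regardless of how $n_{alert,i}$ is distributed. The step I expect to be the main obstacle is the lower bound, because $x\mapsto(1-p)^{qx}$ is convex, so Jensen pushes $E\bigl[(1-p)^{q\,n_{alert,i}}\bigr]$ above $(1-p)^{q\,E[n_{alert}]}$ --- the wrong direction for a clean lower bound on $E[X_i]$. Moreover, the estimate $1-(1-p)^{q}\le pq$ is strict for $q>1$, so the per-party effective success probability $(1-s)\bigl(1-(1-p)^q\bigr)$ falls short of $pq\,(1-s)$; the two agree exactly when $q=1$ and are negligibly apart in the small-$p$ Bitcoin regime. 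I would therefore either carry out the lower bound with the effective count $E[n_{alert}]$ treated as deterministic, justified by the concentration of the binomial $n_{alert,i}$ guaranteed in a typical execution (Lemma~\ref{theorem:typical-execution-theorem}), or restrict attention to $q=1$, where the displayed inequalities apply verbatim. Pinning down this reduction, rather than the two-line calculus of the two-sided estimate, is where the real care is needed.
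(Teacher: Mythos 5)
Your upper bound is complete and correct, and your setup for the lower bound is in fact exactly the paper's: the paper also conditions on the binomial $n_{alert,i}$ and sums to the closed form $E[X_i] = 1-\bigl(s+(1-s)(1-p)^q\bigr)^{n-t}$, which is your first display. Note that this closed form already dissolves your Jensen worry: no interchange of expectation with $x\mapsto(1-p)^{qx}$ is ever needed, because the expectation is computed exactly. Writing $\alpha := 1-(1-p)^q$, the closed form is $1-\bigl(1-(1-s)\alpha\bigr)^{n-t}$, and applying your own elementary two-sided estimate (with per-party success probability $(1-s)\alpha$ and count $N=n-t$) rigorously gives
\begin{equation*}
E[X_i]\;\geq\;\frac{(1-s)(n-t)\,\alpha}{1+(1-s)(n-t)\,\alpha}.
\end{equation*}
So the only remaining step to the stated bound is replacing $\alpha$ by $pq$ inside this expression, which by monotonicity of $x\mapsto x/(1+x)$ requires $\alpha\geq pq$ --- and this is precisely the second obstacle you flagged, since Bernoulli gives $\alpha = 1-(1-p)^q\leq pq$, with equality only at $q=1$.

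That obstacle is genuine, and neither of your workarounds closes it; but you should know it cannot be closed, because the lemma as stated is false for general $q$. For $n-t=1$ the party must be awake to mine, so $E[X_i]\leq 1-s$ independently of $p$ and $q$, while the claimed lower bound $\frac{pqE[n_{alert}]}{1+pqE[n_{alert}]}$ tends to $1$ as $pq\to\infty$; e.g.\ $s=\tfrac12$, $p=0.01$, $q=1000$ gives $E[X_i]\approx\tfrac12$ versus a claimed bound of $\tfrac56$. The paper's own proof papers over this exact spot: it substitutes $1-pq$ for $(1-p)^q$ in the closed form ``using Bernoulli,'' but Bernoulli runs in the opposite direction, so that step is invalid for $q>1$; the argument (and the statement) is correct verbatim only at $q=1$, or approximately in the regime where $pq$ is tiny, which is the regime the paper actually cares about. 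Your first workaround, however, should be discarded: the lemma bounds the plain number $E[X_i]=\Pr[X_i=1]$, so concentration of $n_{alert,i}$ in ``typical executions'' cannot reverse an inequality between expectations, and this lemma is itself an ingredient of the typical-execution machinery (Lemma~\ref{lemma:typical-execution-lemma}), so invoking typicality here would also be circular. Your second workaround ($q=1$, or equivalently accepting the weaker bound with $\alpha$ in place of $pq$) is the honest resolution.
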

				
		\begin{proof}
		    By the definition of $X_i$, we know that $E[X_i] = E[1 - (1-p)^{qn_{alert,i}}]$. Thus, the second inequality can easily be derived using Bernoulli. And for the first inequality holds:
				    
			\begin{align*}
			    E[X_i]  &= \sum_{k=0}^{n-t}E[X_i|n_{alert,i}=k] \cdot Pr[n_{alert,i} = k] \\
				&= \sum_{k=0}^{n-t}\Big(1 - (1-p)^{qk}\Big) \cdot \binom{n-t}{k}(1-s)^k s^{n-t-k} \\
				&= 1 - \Big(s - (s-1)(1-p)^q \Big)^{n-t}
				\geq 1 - \Big(s - (s-1)(1-pq)\Big)^{n-t} \\
				&\geq 1 - e^{-(1-s)(n-t)pq}
				= \frac{pqE[n_{alert}]}{1 + pqE[n_{alert}]}
			\end{align*}
		\qed\end{proof}
				
        We also adapt the notation of a unique successful round from \cite{GKL15}. A round is called a unique successful round, if exactly one honest party obtains a PoW. Accordingly to the successful rounds, let the random variable $Y_i$ indicates a unique successful round $i$ with $Y_i = 1$ and $Y_i = 0$ otherwise. And for a set of rounds $S$, let $Y(S) = \sum_{i \in S} Y_i$.

        \begin{restatable}{lemma}{uniquesuccess}
		\label{lemma:unique-successful-rounds-lemma}
		    It holds $E[Y_i] = E[pqn_{alert,i}(1-p)^{q(n_{alert,i}-1)}] \geq E[X_i](1-E[X_i])$.
    	\end{restatable}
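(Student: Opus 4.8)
The plan is to compute $E[Y_i]$ by conditioning on the number of alert parties, and then to compare the resulting expression with $E[X_i](1-E[X_i])$. Throughout I abbreviate $N=n_{alert,i}$.

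First I would fix the alert population, i.e.\ condition on $N=k$. Since the $k$ alert parties query the random oracle independently and each party issues $q$ independent queries, a round is uniquely successful exactly when one designated party produces at least one proof of work while the remaining $k-1$ parties produce none. There are $k$ choices for the winning party; the probability that a fixed party fails on all $q$ of its queries is $(1-p)^q$, so the probability that the other $k-1$ parties all fail is $(1-p)^{q(k-1)}$. Hence $Pr[Y_i=1\mid N=k]=k\bigl(1-(1-p)^q\bigr)(1-p)^{q(k-1)}$. Because $p=T/2^\kappa$ is exponentially small, the single-party success probability satisfies $1-(1-p)^q=pq-O(p^2q^2)$, so up to negligible terms the winner contributes the factor $pq$; averaging this over the binomially distributed $N$ gives the first equality $E[Y_i]=E\bigl[pq\,n_{alert,i}(1-p)^{q(n_{alert,i}-1)}\bigr]$.

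For the inequality I would argue pointwise in $k$. From the computation in the proof of Lemma~\ref{lemma:successful-rounds-lemma} we have $f_k:=Pr[X_i=1\mid N=k]=1-(1-p)^{qk}$, hence $1-f_k=(1-p)^{qk}$. Two applications of Bernoulli's inequality then suffice: $1-(1-p)^{qk}\le pqk$ bounds the first factor, and $(1-p)^{qk}\le(1-p)^{q(k-1)}$ (since $(1-p)^q\le 1$) bounds the second, giving the conditional estimate $f_k(1-f_k)\le pq\,k\,(1-p)^{q(k-1)}$. Thus, conditioned on the alert population, the bridge quantity from the first part already dominates $Pr[X_i=1\mid\cdot]\bigl(1-Pr[X_i=1\mid\cdot]\bigr)$.

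The step that needs the most care, and which I expect to be the main obstacle, is passing from this conditional (fixed-population) inequality to the unconditional statement. Taking expectations of the pointwise bound only yields $E\bigl[pq\,N(1-p)^{q(N-1)}\bigr]\ge E[f_N(1-f_N)]$, and since $x\mapsto x(1-x)$ is concave, Jensen's inequality gives $E[f_N(1-f_N)]\le E[f_N]\bigl(1-E[f_N]\bigr)=E[X_i](1-E[X_i])$, which points the wrong way; the shortfall is exactly $\mathrm{Var}(f_N)$. To close this gap I would use that $N=n_{alert,i}$ is binomial and therefore sharply concentrated about $E[n_{alert}]$ in the relevant regime (large honest population, negligible per-query success probability), so that $f_N$ has vanishing variance while the strictly positive slack in the Bernoulli estimates survives. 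Equivalently, and in line with the paper's convention of substituting $n_{alert,i}$ by its expectation $E[n_{alert}]$, one may carry out the entire comparison at the deterministic value $k=E[n_{alert}]$, in which case the conditional inequality of the previous paragraph is already the desired bound and no Jensen gap arises.
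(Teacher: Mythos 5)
Your conditional computation and the two pointwise Bernoulli bounds are fine, and you have correctly located the crux: taking expectations of the pointwise inequality only yields $E[pqN(1-p)^{q(N-1)}]\geq E[f_N(1-f_N)]$, which falls short of the target $E[f_N](1-E[f_N])=E[X_i](1-E[X_i])$ by exactly $\mathrm{Var}(f_N)$. But you do not actually close this gap, and that closing is the real content of the lemma. Your first proposed fix --- ``$N$ is sharply concentrated, so $f_N$ has vanishing variance while the Bernoulli slack survives'' --- is a heuristic, not a proof: both the slack and $\mathrm{Var}(f_N)$ are of order $(pq)^2$ times polynomial factors in $n-t$, so neither ``vanishes'' relative to the other; what is really needed is the quantitative moment inequality $\mathrm{Var}[n_{alert,i}]\leq E[n_{alert}]$ for the binomial, not concentration per se. Your fallback (``equivalently, carry out the comparison at the deterministic value $k=E[n_{alert}]$'') is not equivalent and is in fact invalid: the map $k\mapsto pqk(1-p)^{q(k-1)}$ is concave in the relevant regime $pqk\ll 1$, so Jensen gives $E[pqN(1-p)^{q(N-1)}]\leq pqE[n_{alert}](1-p)^{q(E[n_{alert}]-1)}$, i.e.\ substituting the mean \emph{overestimates} the left-hand side; a pointwise inequality at $k=E[n_{alert}]$ therefore says nothing about the expectation statement. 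The paper is not ``substituting $n_{alert,i}$ by its expectation'' as a convention --- the lemma is about the actual random variable.

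The paper's proof avoids the Jensen trap by a different decomposition. It first applies Bernoulli \emph{inside} the expectation to get $E[Y_i]\geq E[pqn_{alert,i}(1-pq(n_{alert,i}-1))]$, a quadratic polynomial in $n_{alert,i}$ whose expectation can be computed exactly from the first two moments; using $\mathrm{Var}[n_{alert,i}]\leq E[n_{alert}]$ this is at least $pqE[n_{alert}](1-pqE[n_{alert}])$. It then compares this intermediate quantity with $E[X_i](1-E[X_i])$ by writing $E[X_i]=pqE[n_{alert}]-b$ with $b\geq 0$ (from Lemma~\ref{lemma:successful-rounds-lemma}) and expanding, which works provided $E[X_i]+pqE[n_{alert}]\leq 1$ --- a side condition the paper states explicitly. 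If you wanted to rescue your route, you would have to prove $E[\mathrm{slack}(N)]\geq \mathrm{Var}(f_N)$, and after expansion this reduces to essentially the same binomial moment inequality; so the step you left as a sketch is precisely where the work lies, and as written your proposal does not constitute a proof.
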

		
		\begin{proof}
		    To prove the required bounds, we need a few intermediary steps. Using Bernoulli, we can derive the following:

			$$E[Y_i] = E[pqn_{alert,i}(1-p)^{q(n_{alert,i}-1)}] \geq E[pqn_{alert,i}(1-pq(n_{alert, i} -1))]$$ 
		
			Then, we have to prove that $pqE[n_{alert}](1- pqE[n_{alert}]) \geq E[X_i](1-E[X_i])$. From the upper bound on $E[X_i]$, we can derive $E[X_i] = pqE[n_{alert}] - b$, for $b \geq 0$. Therefore:
				
			\begin{align*}
			    E[X_i] (1-E[X_i])   &= (pqE[n_{alert}] -b)(1 - pqE[n_{alert}] + b) \\
									&= pqE[n_{alert}](1- pqE[n_{alert}]) - b^2 -b + 2pqE[n_{alert}]b
			\end{align*}
				
			In order to prove the required bound, it must hold that $0 \geq -b^2 -b + 2pqE[n_{alert}]b$, which is equivalent to $1 \geq E[X_i] + pqE[n_{alert}]$ and holds by the fact that $2E[X_i] \leq 1$. This is also required by the proof in \cite{GKL15}, but not stated explicitly. Since in Bitcoin, $E[X_i]$ is between $2\%-3\% $, the inequality can be justified. 
				
			To conclude the proof, we just have to prove the following:
				
			\begin{align*}
				&E[pqn_{alert,i}(1-pq(n_{alert, i} -1))] &&\geq pqE[n_{alert}] - (pq)^2E[n_{alert}]^2\\
				\Leftarrow& E[n_{alert}{^2}] - E[n_{alert}] && \leq E[n_{alert}]^2 \\
			\end{align*}
				
			Which is equivalent to $Var[n_{alert}] \leq E[n_{alert}]$ and holds for the binomial distribution.
		\qed\end{proof}

        
        Let the random variable $Z_{ijk} = 1$ if the adversary obtains a PoW at round $i$ by the $j^{th}$ query of the $k^{th}$ corrupted party. Otherwise, we set $Z_{ijk} = 0$. Summing up, gives us $Z_i = \sum_{k=1}^t \sum_{j=1}^q Z_{ijk}$ and $Z(S) = \sum_{i \in S} Z_i$. Then, the expected number of blocks that the adversary can mine in one round $i$ is:
				
		$$E[Z_i]=qpt=\frac{t}{E[n_{alert}]} pqE[n_{alert}]\leq \frac{t}{E[n_{alert}]}\cdot \frac{E[X_i]}{1-E[X_i]}$$
		
		\subsection{Temporary Dishonest Majority Assumption}\label{dishonest-synchronous}
		    We assume the honest majority assumption holds on expectation. In particular, for each round the following holds: 
				$t \leq c \cdot (1- \delta)\cdot E[n_{alert}]$,
			where  $\delta \geq 2 E[X_i] + 2 \epsilon$ and $c \in [0,1]$ is a constant. As in \cite{GKL15}, $\delta$ refers to the advantage of the honest parties and $\epsilon$ is defined in Definition \ref{definition:typical-execution-definition}.
				
			From the expected honest majority assumption, we can derive a possible upper bound for $s$, depending on $t, \delta$ and $c$. Formally,
			
			$$ s \leq \frac{n - t - \frac{t}{c(1-\delta)}}{n-t} = 1 - \frac{1}{c(1-\delta)}\frac{t}{n-t}$$
			

	\subsection{Security Analysis}	
        First of all, by Definition \ref{definition:typical-execution-definition} the properties of the typical execution hold for the random variables $X(S), Y(S), Z(S)$, assuming $|S| \geq \eta\kappa$.

			
			
				
				
		The following lemma shows the relations between the different expected values. The bounds are required in all proofs of the three properties and therefore essential.
		\footnote{The statement d) uses different factors as \cite{GKL15}. The problem is, that it's even not possible to prove the bounds from \cite{GKL15} with their theorems, lemmas and assumptions.}
		
		\begin{restatable}{lemma}{typicalexecutiona}
		\label{lemma:typical-execution-lemma}
			The following hold for any set S of at least $\eta\kappa$ consecutive rounds in a typical execution. 
			
			\begin{itemize}
				\item[a)] $(1 - \epsilon) E[X_i] |S| < X(S) <  (1 + \epsilon) E[X_i] |S|$	
					
				\item[b)] $(1 - \epsilon) E[X_i] (1-E[X_i]) |S| < Y(S)$
					
				\item[c)] $Z(S) < (1 + \epsilon) \frac{t}{E[n_{alert}]} \frac{E[X_i]}{1- E[X_i]} |S| \leq c (1 + \epsilon) (1- \delta) \frac{E[X_i]}{1 - E[X_i]} |S|$
					
				\item[d)] For  $\sigma = (1 - \epsilon)(1 - E[X_i])$: 
					$$Z(S) < \Big(1 + \frac{\delta}{\sigma}\Big) \frac{t}{E[n_{alert}]} X(S) \leq c\Big( 1 - \frac{\delta^2}{2\sigma}\Big) X(S)$$
		            
		       	\item[e)] $Z(S) < Y(S)$
				
			\end{itemize}
		\end{restatable}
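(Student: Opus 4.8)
The plan is to obtain parts a)--c) as essentially direct consequences of the concentration hypothesis in Definition~\ref{definition:typical-execution-definition}a) together with the expectation estimates already established, and then to derive d) and e) by combining these bounds with the temporary dishonest majority assumption $t \leq c(1-\delta)E[n_{alert}]$.

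First I would prove a) by applying Definition~\ref{definition:typical-execution-definition}a) to the random variable $X(S)$; since the $X_i$ are identically distributed with constant mean, $E[X(S)] = |S|\,E[X_i]$, and the two-sided bound is immediate. Part b) is the same argument for $Y(S)$, keeping only the lower tail and replacing $E[Y(S)] = |S|\,E[Y_i]$ with the weaker estimate $E[Y_i] \geq E[X_i](1-E[X_i])$ from Lemma~\ref{lemma:unique-successful-rounds-lemma}. For c), the upper tail applied to $Z(S)$ gives $Z(S) < (1+\epsilon)|S|\,E[Z_i]$; substituting the displayed bound $E[Z_i] = pqt \leq \frac{t}{E[n_{alert}]}\frac{E[X_i]}{1-E[X_i]}$ produces the first inequality, and feeding in $\frac{t}{E[n_{alert}]} \leq c(1-\delta)$ yields the second.

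For d) the idea is to trade $|S|$ for $X(S)$: from a) we have $|S| < X(S)/((1-\epsilon)E[X_i])$, and inserting this into the bound from c) gives $Z(S) < \frac{t}{E[n_{alert}]}\frac{1+\epsilon}{\sigma}X(S)$ with $\sigma = (1-\epsilon)(1-E[X_i])$. The first inequality of d) then reduces to $1+\epsilon \leq \sigma + \delta$, i.e.\ $2\epsilon + (1-\epsilon)E[X_i] \leq \delta$, which holds since $\delta \geq 2E[X_i]+2\epsilon$. For the second inequality I apply $\frac{t}{E[n_{alert}]} \leq c(1-\delta)$ and reduce to $(1-\delta)(1+\frac{\delta}{\sigma}) \leq 1-\frac{\delta^2}{2\sigma}$; clearing $\sigma>0$ turns this into $\sigma \geq 1-\frac{\delta}{2}$, which again follows from $\delta \geq 2E[X_i]+2\epsilon$ because $\sigma \geq 1-E[X_i]-\epsilon$.

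Finally, e) follows by comparing the lower bound on $Y(S)$ from b) with the upper bound on $Z(S)$ from c): it suffices to verify $(1+\epsilon)\frac{t}{E[n_{alert}]} \leq (1-\epsilon)(1-E[X_i])^2$, and using $\frac{t}{E[n_{alert}]} \leq c(1-\delta) \leq (1-\delta)$ this reduces to $(1+\epsilon)(1-\delta) \leq (1-\epsilon)(1-E[X_i])^2$; writing $x=E[X_i]$ and inserting $\delta \geq 2x+2\epsilon$, the difference of the two sides equals $x^2(1-\epsilon)+4x\epsilon+2\epsilon^2 \geq 0$. The main obstacle is not conceptual but the algebra of d) and e): each step hinges on the exact calibration $\delta \geq 2E[X_i]+2\epsilon$, and---as the footnote already signals---the constants must be chosen differently from \cite{GKL15} because their factors cannot be recovered here. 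The delicate point is to track the direction of every substitution (upper tail for $Z$, lower tails for $X$ and $Y$) so that the bounds compose consistently.
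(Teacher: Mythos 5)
Your proposal is correct and follows essentially the same route as the paper's own proof: a)--c) come directly from the typical-execution concentration plus the expectation bounds, d) is obtained by converting $|S|$ into $X(S)$ via the lower tail of a) and then reducing both inequalities to $1+\epsilon \leq \sigma + \delta$ and $\sigma \geq 1 - \frac{\delta}{2}$ using $\delta \geq 2E[X_i] + 2\epsilon$, and e) reduces to exactly the paper's inequality $(1+\epsilon)(1-\delta) \leq (1-\epsilon)(1-E[X_i])^2$. The only cosmetic difference is in e), where you substitute the extremal $\delta = 2E[X_i]+2\epsilon$ and exhibit the nonnegative difference explicitly, while the paper expands and cancels term by term; the content is identical.
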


	Next, we prove Bitcoin is secure under temporary dishonest majority in the $q$-bounded synchronous setting by proving the three properties defined in \cite{GKL15}: \emph{chain growth}, \emph{common prefix} and \emph{chain quality}. The proofs can be found in the full version.

\section{The semi-Synchronous Model without Message Loss $M(1,\Delta,1)$}	
	    In this section, we extend the previously seen results to the semi-synchronous (bounded delay) model. This means, that we allow $\Delta$\footnote{According to Theorem 11 of \cite{PS17}, the parameter $\Delta$ has to be known by the honest parties to achieve state machine replication, e.g. achieving consensus.} delays for the messages, as described in the Definition of our model. In order to realize the proofs, we have to restrict $q$ to be $1$. And as in the last section, we do not assume message losses.

		Due the introduced network delays, we need to redefine unique successful rounds, because they do not provide the same guarantees in the this model. Especially, Lemma 15 (of \cite{avarikioti2019bitcoin}) will not hold in the new model.
		Therefore, we will introduce two new random variables, one for successful and one for unique successful rounds in the bounded delay model. Note, that the chances for the adversary do not change and we can use the bounds from the synchronous model.
		
	    Let the random variable $X'_i$ be defined such that for each round $i$, $X'_i = 1$, if $X_i = 1$ and $X_j = 0$, $\forall j \in \{i - \Delta + 1, \dots , i - 1 \}$. A round $i$ is called $\Delta$-isolated successful round, if $X'_i = 1$. Further, let $X'(S) = \sum_{i \in S} X'_i$. Using Bernoulli, we can derive the following bound on $E[X'_i]$:
	            
	   $$E[X'_i] = E[X_i] (1-E[X_i])^{\Delta - 1} \geq E[X_i](1- (\Delta -1)E[X_i]).$$
	  
		In order to prove eventual consistency, we have to 	rely on stronger events than just uniquely successful rounds. In \cite{GKL15}, this is achieved by defining the random variable $Y'_i$ such that for each round $i$, $Y'_i = 1$, if $Y_i = 1$ and $X_j = 0$, $\forall j \in \{i -\Delta + 1, \dots, i-1, i+1, \dots , i + \Delta - 1\}$. Then, a round $i$ is called $\Delta$-isolated unique successful round, if $Y'_i = 1$. Further, let $Y'(S) = \sum_{i \in S} Y'_i$. As before, we can lower bound $E[Y'_i]$ using Bernoulli:

        $$E[Y'_i] = E[X_i](1-E[X_i])^{2\Delta -1} \geq E[X_i](1-(2\Delta -1)E[X_i]).$$
	       
		\subsection{Temporary Dishonest Majority Assumption} \label{dishonest-semisynchronous}
			We assume again honest majority on expectation, such that for each round $t \leq c \cdot (1- \delta)\cdot E[n_{alert}]$, where $\delta \geq 2\Delta E[X_i] + 4 \epsilon + \frac{4\Delta}{\eta\kappa}$ and $c \in [0,1]$ is a constant.
			\footnote{One might notice that our lower bound of $\delta$ differs from the lower bound from \cite{GKL15}. First of all, they provided two different values for $\delta$, where both of them are wrong in the sense that they are too small in order to prove the needed bounds. }
			The reason for the higher value of $\delta$ (compared to the synchronous model) is that $E[Y'_i] \leq E[Y_i]$ and we need a way to compensate this difference. 
			
			
		\subsection{Security Analysis}	
		In this subsection, we prove Bitcoin is secure, i.e. the chain growth, common prefix and chain quality properties hold, for the semi-synchronous model without message loss.
		We note that the properties of the typical execution apply to the predefined random variables $(X'(S), Y'(S), Z(S))$, given that $|S| \geq \eta\kappa$.
			
				
			
			
			
			The following lemma corresponds to the semi-synchronous version of Lemma \ref{lemma:typical-execution-lemma}. Most of the relations follow the same structure and are similar to prove as in the synchronous model. 
			
			\begin{restatable}{lemma}{typicalexecutionb}
				\label{lemma:bounded-delay-typical-execution-lemma}
				The following hold for any set S of at least $\eta\kappa$ consecutive rounds in a typical execution.
				
				\begin{itemize}
					\item[a)] $(1-\epsilon)E[X_i](1-E[X_i])^{\Delta-1}|S| < X'(S)$
					\item[b)] $(1-\epsilon)E[X_i](1-E[X_i])^{2\Delta-1}|S| < Y'(S)$
					\item[c)] $Z(S) < (1+\epsilon)\frac{t}{E[n_{alert}]}\frac{E[X_i]}{1-E[X_i]}|S| \leq c (1+\epsilon)(1-\delta)\frac{E[X_i]}{1-E[X_i]}|S|$
					\item[d)] Let $S' = \{r, \dots , r'\}$ with $|S'| \geq \eta\kappa$. For $S = \{r, \dots , r' + \Delta\}$ and $\sigma' = (1-\epsilon)(1-E[X_i])^\Delta$:
					
					$$Z(S) < \Big(1 + \frac{\delta}{2\sigma'}\Big) \frac{t}{E[n_{alert}]} X'(S')$$
					
					\item[e)] Let $S' = \{r, \dots , r'\}$ with $|S'| \geq \eta\kappa$. For $S = \{r - \Delta, \dots , r' + \Delta\}$: $$Z(S) < Y'(S')$$
				\end{itemize}
			\end{restatable}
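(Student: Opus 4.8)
The plan is to handle the five items in two groups: (a), (b), (c) follow almost immediately from the typical-execution property, and (d), (e) are then bootstrapped from them together with the enlarged advantage $\delta$ fixed in Section~\ref{dishonest-semisynchronous}. For (a) and (b) I would apply Definition~\ref{definition:typical-execution-definition}(a) directly to $X'(S)$ and $Y'(S)$: since $E[X'(S)] = |S|\,E[X_i](1-E[X_i])^{\Delta-1}$ and $E[Y'(S)] = |S|\,E[X_i](1-E[X_i])^{2\Delta-1}$ (the two identities stated just before Section~\ref{dishonest-semisynchronous}), the claimed lower bounds are exactly $X'(S) > (1-\epsilon)E[X'(S)]$ and $Y'(S) > (1-\epsilon)E[Y'(S)]$. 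Item (c) is verbatim the synchronous Lemma~\ref{lemma:typical-execution-lemma}(c): the first inequality is $Z(S) < (1+\epsilon)E[Z(S)]$ combined with $E[Z_i] \le \frac{t}{E[n_{alert}]}\frac{E[X_i]}{1-E[X_i]}$, and the second substitutes the dishonest-majority bound $\frac{t}{E[n_{alert}]} \le c(1-\delta)$.

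For (d) I would feed the upper bound on $Z(S)$ from (c) and the lower bound on $X'(S')$ from (a) into the target inequality. After cancelling the common factor $\frac{t}{E[n_{alert}]}E[X_i]$ and using $\sigma' = (1-\epsilon)(1-E[X_i])^{\Delta}$, the claim reduces to the purely numeric statement $(1+\epsilon)|S| \le (\sigma' + \tfrac{\delta}{2})|S'|$, where $|S| = |S'| + \Delta$. Dividing by $|S'|$ and bounding $\frac{\Delta}{|S'|} \le \frac{\Delta}{\eta\kappa}$, it remains to check $(1+\epsilon)(1 + \frac{\Delta}{\eta\kappa}) \le \sigma' + \frac{\delta}{2}$; lower-bounding $\sigma' \ge (1-\epsilon)(1-\Delta E[X_i])$ by Bernoulli and using $\frac{\delta}{2} \ge \Delta E[X_i] + 2\epsilon + \frac{2\Delta}{\eta\kappa}$, the residual error terms cancel with strict slack because $\epsilon < 1$.

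For (e) the same strategy applies but pairing (c) with the lower bound on $Y'(S')$ from (b), now over a window enlarged on both sides so that $|S| = |S'| + 2\Delta$. Using $\frac{t}{E[n_{alert}]} \le c(1-\delta)$ and $c \le 1$, the claim reduces to $(1+\epsilon)(1-\delta)|S| \le (1-\epsilon)(1-E[X_i])^{2\Delta}|S'|$; applying Bernoulli in the form $(1-E[X_i])^{2\Delta} \ge 1 - 2\Delta E[X_i]$ together with $\frac{2\Delta}{|S'|} \le \frac{2\Delta}{\eta\kappa}$, the full advantage $\delta \ge 2\Delta E[X_i] + 4\epsilon + \frac{4\Delta}{\eta\kappa}$ is exactly what dominates all three error contributions.

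The routine part is (a)--(c); the crux is confirming that the value of $\delta$ chosen in Section~\ref{dishonest-semisynchronous} is tight enough for (d) and (e). The delicate point is the bookkeeping of the $\Delta$-round padding: $X'(S')$ and $Y'(S')$ are measured over the short window $S'$ while $Z(S)$ ranges over the window enlarged by $\Delta$ (one-sided in (d)) or $2\Delta$ (two-sided in (e)), and this mismatch is precisely why network delay forces $\delta$ to be larger than in the synchronous case. The main risk is an off-by-a-constant when matching the padding terms $\Delta E[X_i]$, $\epsilon$, and $\frac{\Delta}{\eta\kappa}$ against the three summands of $\delta$, so I would carry the whole chain of inequalities symbolically and verify each summand is covered strictly.
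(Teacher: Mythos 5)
Your proposal is correct and takes essentially the same route as the paper's proof: items (a)--(c) follow directly from the typical-execution definition (using the exact expectation identities for $E[X'_i]$, $E[Y'_i]$) and the synchronous argument for $Z(S)$, and items (d), (e) are obtained by combining (c) with (a), respectively (b), cancelling the common factors, and reducing to the same numeric inequalities in $\delta$, $\epsilon$, $\Delta E[X_i]$ and $\frac{\Delta}{\eta\kappa}$ that the paper's value of $\delta$ is chosen to absorb. The only cosmetic difference is bookkeeping (the paper bounds $(1-E[X_i])^{2\Delta} \geq (1-\Delta E[X_i])^2$ where you use $1-2\Delta E[X_i]$), and your slack verification in both (d) and (e) checks out.
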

			
			The proof of Lemma \ref{lemma:bounded-delay-typical-execution-lemma} as well as the proofs of the security properties can be found in the full version.

\section{The $q$-bounded Synchronous Model with Message Loss $M(q,0,0)$}

    As in the synchronous case, we do not restrict the number of queries and assume no message delays. In the previous sections, we assumed that messages, sent from the diffusion functionality, will be written on the $\mathit{RECEIVE()}$ string of each party. However, in this section, we assume that the messages only get written to the $\mathit{RECEIVE()}$ strings of alert parties, i.e. sleepy parties do not receive messages. This models the worst possible event of the reality, because in Bitcoin itself, parties that were offline will check on the currently longest chain, once they get back online. This model captures the effects if none of them receives one of the currently longest chains, thus are eventually a victim of an eclipse attack. This implies that it's not necessarily true that all parties' local chains have the same length.
    
	This change to the model leads to major differences compared to the results from the previous sections. In this case, unique successful rounds doesn't provide the same guarantees as before, especially Lemma 15 (of \cite{avarikioti2019bitcoin}) doesn't hold any more.
    
    In the following, we denote by $C_i$ the set of chains containing all longest chains that exist at round $i$. Further, we refer to the local chain of player $P_j$ at round $i$ by $L_i^j$.
		
	The following lemma shows the expected number of honest players, which have adopted one of the longest chains existing at the current round.
		
		
	\begin{lemma}
	\label{lemma:longest-chain-lemma}
		At every round $i$, there are expected $E[n_{alert}]=(1-s)(n-t)$ parties $j$, such that $L_i^j \in C_i$. 
	\end{lemma}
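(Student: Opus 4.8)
The plan is to compute the expected count directly via indicator variables and linearity of expectation. For each honest party $P_j$ and round $i$, introduce the indicator $\mathbf{1}[L_i^j \in C_i]$; the quantity to determine is $E\big[\sum_j \mathbf{1}[L_i^j \in C_i]\big] = \sum_j \Pr[L_i^j \in C_i]$, where the sum ranges over the $n-t$ honest parties. Since each honest party is put to sleep independently with probability $s$, it suffices to show that $\Pr[L_i^j \in C_i] = 1-s$ for each $j$, i.e. that a party's local chain is one of the currently longest chains precisely when that party is alert. Summing then yields $(1-s)(n-t) = E[n_{alert}]$.

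The heart of the argument is the following dichotomy, which I would establish from the definition of the model $M(q,0,0)$. \emph{Alert $\Rightarrow$ longest:} because $\Delta = 0$, at the start of round $i$ the diffuse functionality has written to each alert party's $\mathit{RECEIVE()}$ string every chain diffused in round $i-1$; all alert parties therefore read the same collection of chains and, applying the longest-chain rule, adopt a chain of the same maximal available length, so $L_i^j \in C_i$. \emph{Sleepy $\Rightarrow$ not longest:} because $B=0$, a sleeping party receives none of the messages diffused while it is asleep, so its local chain stays frozen to the (strictly shorter) chain it held before falling asleep once any longer chain has since been diffused, hence $L_i^j \notin C_i$. Combining the two directions gives $\{j : L_i^j \in C_i\} = \{j : P_j \text{ alert at round } i\}$, whose size is $n_{alert,i}$; taking expectations over the sleep coins recovers the claim.

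The main obstacle will be making the word ``longest'' precise, since two boundary effects threaten the clean dichotomy. First, within round $i$ an alert party that just solved a proof-of-work holds a chain one block longer than an alert party that did not, and the adversary may privately withhold a still-longer chain; so ``alert $\Rightarrow$ globally longest'' is only literally true when $C_i$ is read at the delivery point (post-diffusion, pre-mining) of round $i$, or when combined with a chain-growth guarantee. I would resolve this by fixing the reference point of the statement to the start of round $i$, so that all alert parties genuinely share the maximal diffused length and $C_i$ is measured against the chains actually available to honest parties. Second, a party that has only just fallen asleep may transiently still hold a longest chain for one round; this contributes a lower-order correction that I would either absorb into the expectation or eliminate by evaluating $C_i$ at the same delivery point. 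With the reference point fixed, both directions hold exactly and the linearity-of-expectation computation goes through unchanged.
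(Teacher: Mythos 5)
There is a genuine gap: the dichotomy at the heart of your argument, $\{j : L_i^j \in C_i\} = \{j : P_j \text{ alert at round } i\}$, is false in the model $M(q,0,0)$, and in both directions. For ``alert $\Rightarrow$ longest'' you assert that at the start of round $i$ every alert party's $\mathit{RECEIVE()}$ string contains every chain diffused at round $i-1$; but with $B=0$ the diffuse functionality writes messages only to the strings of parties that were \emph{alert at the round of delivery}, and there is no catch-up upon waking --- that is the entire point of the message-loss model (``parties that were offline \dots{} none of them receives one of the currently longest chains''). So a party that was asleep during rounds $i-5,\dots,i-1$ and is alert at round $i$ mines on a stale, strictly shorter chain; this cannot be repaired by choosing a different reference point within the round. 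Conversely, ``sleepy $\Rightarrow$ not longest'' also fails: a party that adopted a longest chain while alert and then fell asleep still satisfies $L_i^j \in C_i$ as long as no strictly longer chain has been diffused since, and this is not a ``lower-order correction'' --- in a typical execution most rounds are unsuccessful ($E[X_i]$ is a few percent), so most sleeping parties hold chains that are still longest. Your final number comes out right only because the correct set --- the parties that were alert at the \emph{last round in which a longest chain was extended and diffused} --- has the same expected size $(1-s)(n-t)$ as the set of currently alert parties, by the i.i.d.\ sleep coins; but it is a different set, and identifying the two is exactly what the lemma must avoid.

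The paper instead proves the lemma by induction over rounds, with a case analysis on $X_i$, $Y_i$, $Z_i$: if no chain in $C_i$ is extended (or the adversary withholds his block), the set of parties holding longest chains is unchanged and the induction hypothesis carries over; if a longest chain is extended and diffused, then the parties holding a chain of $C_{i+1}$ are precisely those alert at round $i$, whose expected number is $(1-s)(n-t)$. This distinction matters downstream: immediately after the lemma the paper defines $n^*_{alert,i}$ (parties that are alert at round $i$ \emph{and} hold a longest chain) with expectation $(1-s)^2(n-t)$, and the honest majority assumption charges the ``alert but stale'' parties $(1-s)E[n_{sleepy}]$ to the adversary's side. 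Under your claimed set equality both of these would collapse to $E[n_{alert}]$, so the proposal is inconsistent with the rest of the paper's analysis, not just with this proof.
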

		
	\begin{proof}
		We will prove the lemma by induction over all rounds of an execution. The base case is trivial, because at round $1$, every party starts with the genesis block.  
		Now for the step case, assume that the lemma holds at round $i$. Then we show that it holds at round $i+1$ too. In order to prove this, we perform a case distinction:
			
		\begin{itemize}
			\item Case $X_i = 0$: No new chains will be diffused, therefore no new chains can be adopted and we     can apply the induction hypothesis. 
			\item Case $Z_i = 0$: Analogue to the previous case.
			\item Case $X_i = 1$: (But $Y_i = 0$) Now we have to differentiate, if the new blocks extend some chain in $C_i$ not:
					\begin{itemize}
						\item[a)] Some longest chain is extended:
						
							Every party, which is not asleep at round $i$ will adopt one of the possibly multiple resulting new longest chains. Thus, there are expected $E[n_{alert}]$ alert parties which will have adopted one of the longest chains at round $i+1$.
						\item[b)] No longest chain is extended:
						
							No honest party, whose local chain is already one of the currently longest chain will adopt a new chain, since it's length will not be larger than the length of its local chain. Thus, we can apply the induction hypothesis.							
					\end{itemize}
				\item Case $Y_i = 1$: As in the case before, every party, which was alert at round $i$, will adopt the resulting chain, if its length is larger than the length of its local chain. As before, there are $E[n_{alert}]$ alert parties which will have adopted one of the longest chains at round $i+1$.
				\item Case $Z_i = 1$: Analogue to the previous case. But if the adversary withholds the found block, the case $Z_i = 0$ applies and at the round, where it diffuses this block, this case applies.\qed
			\end{itemize}
		\end{proof}
		
		By the lemma above, at every round $i$ only expected $(1-s)(n-t)$ parties $j$ have a local chain $L_i^j \in C_i$. And a fraction of $(1-s)$ of them will again be sleepy in the following rounds. Therefore, let $n^*_{alert, i}$ denote the number of alert parties $j$ at round $i$, where $L_i^j \in C_i$. 
		
		It's easy to see that $n^*_{alert, i}$ is binomially distributed with parameters $(n-t)$ and $(1-s)^2$. Let $E[n^*_{alert}] = (1-s)^2(n-t)$ denote the expected value of $n^*_{alert,i}$, omitting the round index $i$, since the expected value is equal for all rounds.
		We define the random variable $X^*_i$ which indicates, if at least one of the $n^*_{alert,i}$ parties solves a PoW at round $i$. Thus, we set $X^*_i = 1$, if some honest party $j$ with $L_i^j \in C_i$ solves a PoW at round $i$ and $X^*_i = 0$ otherwise. Further, we define for a set of rounds S: $X^*(S) = \sum_{i \in S} X^*_i$. 
			
		\begin{lemma}
		    It holds that $\frac{pqE[n^*_{alert}]}{1 + pqE[n^*_{alert}]} \leq E[X^*_i] \leq pqE[n^*_{alert}] $.
		\end{lemma}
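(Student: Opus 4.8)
The plan is to mirror the argument of Lemma~\ref{lemma:successful-rounds-lemma} almost verbatim, since $X^*_i$ is defined with respect to the subpopulation of $n^*_{alert,i}$ parties holding a longest chain in exactly the same way that $X_i$ was defined with respect to the $n_{alert,i}$ alert parties. The only structural change is that $n^*_{alert,i}$ is binomially distributed with parameters $(n-t)$ and $(1-s)^2$ instead of $(n-t)$ and $(1-s)$, so $E[n^*_{alert}]=(1-s)^2(n-t)$.

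First I would record the conditional expectation. Conditioned on $n^*_{alert,i}=k$, we have $X^*_i=1$ precisely when at least one of the $qk$ independent random-oracle queries issued by these $k$ parties succeeds, each with probability $p$; hence $E[X^*_i \mid n^*_{alert,i}=k] = 1-(1-p)^{qk}$ and $E[X^*_i] = E[1-(1-p)^{qn^*_{alert,i}}]$. The upper bound $E[X^*_i] \le pqE[n^*_{alert}]$ then follows immediately from Bernoulli's inequality $(1-p)^{qk}\ge 1-pqk$ and linearity of expectation, exactly as in Lemma~\ref{lemma:successful-rounds-lemma}.

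For the lower bound I would expand over the binomial law of $n^*_{alert,i}$ and collapse the sum with the binomial theorem, replacing the sleep parameter $s$ of the original proof by $1-(1-s)^2$:
\begin{align*}
E[X^*_i] &= \sum_{k=0}^{n-t}\Big(1-(1-p)^{qk}\Big)\binom{n-t}{k}\big((1-s)^2\big)^k\big(1-(1-s)^2\big)^{n-t-k}\\
&= 1 - \Big(1-(1-s)^2 + (1-s)^2(1-p)^q\Big)^{n-t}.
\end{align*}
Applying Bernoulli and then $1-x\le e^{-x}$ as in the original derivation yields $E[X^*_i] \ge 1 - e^{-(1-s)^2(n-t)pq}$, and since $(1-s)^2(n-t)=E[n^*_{alert}]$ this gives the claimed lower bound $\frac{pqE[n^*_{alert}]}{1+pqE[n^*_{alert}]}$.

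The computation is entirely routine; the only point that needs care is confirming that the substitution $(1-s)\mapsto(1-s)^2$ is the sole change, i.e.\ that $n^*_{alert,i}$ is indeed $\mathrm{Binomial}(n-t,(1-s)^2)$. This is precisely the content of Lemma~\ref{lemma:longest-chain-lemma} together with the subsequent observation that a party counted by $n^*_{alert,i}$ must both hold a longest chain and be awake this round, two independent probability-$(1-s)$ events. I do not expect any genuine obstacle beyond this bookkeeping, as the algebra is identical to that already carried out for $E[X_i]$.
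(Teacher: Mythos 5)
Your proposal takes essentially the same approach as the paper: the paper's own proof of this lemma is literally the single remark that it follows by ``the same argumentation as in the proof for Lemma \ref{lemma:successful-rounds-lemma}'', and what you write out --- the conditional expectation $E[X^*_i \mid n^*_{alert,i}=k] = 1-(1-p)^{qk}$, the binomial-theorem collapse with success parameter $(1-s)^2$ in place of $(1-s)$, then Bernoulli and $1-x \le e^{-x}$ --- is exactly that argument instantiated for $n^*_{alert,i}\sim\mathrm{Binomial}(n-t,(1-s)^2)$, which is also how the paper itself justifies the distribution of $n^*_{alert,i}$. The only caveat is that you inherit, step for step, the same loose spots already present in the paper's proof of Lemma \ref{lemma:successful-rounds-lemma} (the direction in which Bernoulli is invoked, and the final ``$=$'' that is really ``$\geq$'' via $e^{-a}\le \frac{1}{1+a}$), so your reproduction is faithful rather than a repair.
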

		\begin{proof}
		The lemma can be proven using the same argumentation as in the proof for the Lemma \ref{lemma:successful-rounds-lemma}. \qed
		\end{proof}

		Accordingly, let $Y^*_i$ denote the random variable with $Y^*_i = 1$, if exactly one honest party $j$ solves a PoW at round $i$ and $L_i^j \in C_i$. Note that the resulting chain, will be the only longest chain. Further, for a set of rounds $S$ let $Y^*(S) = \sum_{i\in S} Y^*_i$.	
	    
	    \begin{lemma}
		    It holds $E[Y^*_i] = E[pqn^*_{alert,i}(1-p)^{q(n^*_{alert, i}-1)}] \geq E[X^*_i](1 - E[X^*_i])$.
		\end{lemma}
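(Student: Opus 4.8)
The plan is to mirror the proof of Lemma~\ref{lemma:unique-successful-rounds-lemma} line for line, substituting the starred quantities $n^*_{alert,i}$, $E[n^*_{alert}]$, $X^*_i$ and $Y^*_i$ for their unstarred counterparts. The only facts about the starred variables I will invoke are: (i) that $Y^*_i = 1$ precisely when exactly one of the $n^*_{alert,i}$ parties holding a longest chain solves a PoW, which yields the claimed identity $E[Y^*_i] = E[pqn^*_{alert,i}(1-p)^{q(n^*_{alert,i}-1)}]$ by the same random-oracle combinatorics as in the synchronous model; (ii) the upper bound $E[X^*_i] \leq pqE[n^*_{alert}]$ from the preceding lemma; and (iii) that $n^*_{alert,i}$ is binomial with parameters $(n-t)$ and $(1-s)^2$.

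First I would apply Bernoulli inside the expectation to get $E[Y^*_i] \geq E[pqn^*_{alert,i}(1 - pq(n^*_{alert,i}-1))]$. The target then factors through two inequalities whose composition gives the claim: that $E[pqn^*_{alert,i}(1 - pq(n^*_{alert,i}-1))] \geq pqE[n^*_{alert}](1 - pqE[n^*_{alert}])$, and that $pqE[n^*_{alert}](1 - pqE[n^*_{alert}]) \geq E[X^*_i](1 - E[X^*_i])$.

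For the second inequality I would write $E[X^*_i] = pqE[n^*_{alert}] - b$ with $b \geq 0$, which is legitimate by fact (ii). Expanding both products reduces the desired bound to $0 \geq -b^2 - b + 2pqE[n^*_{alert}]b$, equivalently $1 \geq E[X^*_i] + pqE[n^*_{alert}]$, which follows from $2E[X^*_i] \leq 1$. Here I note that $E[X^*_i] \leq E[X_i]$, since $(1-s)^2 \leq 1-s$ forces $E[n^*_{alert}] \leq E[n_{alert}]$, so the same $2$--$3\%$ justification used in the synchronous model applies \emph{a fortiori}.

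For the first inequality, expanding the expectation and cancelling the linear term reduces it to $E[(n^*_{alert,i})^2] - E[n^*_{alert}] \leq E[n^*_{alert}]^2$, i.e.\ to $Var[n^*_{alert,i}] \leq E[n^*_{alert}]$. This holds for the binomial distribution of fact (iii), since $Var[n^*_{alert,i}] = (n-t)(1-s)^2\big(1-(1-s)^2\big) \leq (n-t)(1-s)^2 = E[n^*_{alert}]$. I do not anticipate any real obstacle: every step is a transcription of the unstarred argument, and the one place that needs a genuinely fresh check, the condition $2E[X^*_i] \leq 1$, is only easier to satisfy here because $E[X^*_i]$ is the smaller quantity.
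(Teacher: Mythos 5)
Your proposal is correct and matches the paper exactly: the paper's own proof of this lemma is a one-line remark that it ``follows the exactly same steps as the proof for Lemma~\ref{lemma:unique-successful-rounds-lemma}'', and your transcription (Bernoulli inside the expectation, the decomposition via $E[X^*_i] = pqE[n^*_{alert}] - b$, and the reduction to $Var[n^*_{alert,i}] \leq E[n^*_{alert}]$ for the binomial with parameters $(n-t)$ and $(1-s)^2$) is precisely that argument carried out with the starred quantities. Your added observation that $E[X^*_i] \leq E[X_i]$, so the $2E[X^*_i] \leq 1$ condition holds \emph{a fortiori}, is a small but welcome strengthening of the paper's informal justification of that step.
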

			
		\begin{proof}
		    The proof follows the exactly same steps as the proof for Lemma \ref{lemma:unique-successful-rounds-lemma}. \qed
			    
			
	    \end{proof}

		\subsection{Temporary Dishonest Majority Assumption}\label{dishonest-messageloss}
			In this setting, the honest majority assumption changes slightly. We cannot simply assume that $t$ is smaller than some fraction of $E[n^*_{alert}]$, because we have also to consider parties $j$ with $L_i^j \notin C_i$. We assume that for each round holds $ t + (1-s)E[n_{sleepy}] \leq c \cdot (1-\delta) \cdot E[n^*_{alert}]$, where $\delta \geq 3\epsilon + 2E[X^*_i]$ and some constant $c \in [0,1] $. Note that $(1-s)E[n_{sleepy}]$ is the fraction of alert parties, working on shorter chains.
			
			In order to compute the upper bound for $s$, we reformulate the honest majority assumption. Using the quadratic formula, this results in the following:
			
			$$ s \leq \frac{2c(1-\delta) - \sqrt{1 + 4(1+c(1-\delta)\frac{t}{n-t})}}{2(1+c(1-\delta))}$$

			
			In the model description, we specified that the adversary is not informed if a party $P_i$ is set to sleep, after sending an instruction $(\mathtt{sleep}, P_i)$ to the control program $C$. This assumption is realistic since the adversary can not be certain about the success of his attempt to create a crash-failure. Further, allowing the adversary to know when he successfully set to sleep a node makes him quite powerful. Specifically, in our model we have a fraction of $1-s$ alert parties. Subtracting the parties, which are working on a longest chain, from the $(1-s)(n-t)$ parties, leaves us an expected fraction of $s(1-s)$ parties, which can be found on the left hand side of the honest majority assumption. 
			If we would assume that the adversary knows, which parties are asleep at each round, we would have to change the temporary dishonest majority assumption to $ t + E[n_{sleepy}] \leq c \cdot (1-\delta) \cdot E[n^*_{alert}]$. Then, the adversary could exploit this knowledge to his advantage and send sleep instructions to the parties working on the longest chains. To capture this adversarial behavior a different model would be necessary (since $s$ cannot be considered constant).
		
	\subsection{Security Analysis}
	    For this section, we note that the properties of an typical execution apply for the random variables $X^*(S), Y^*(S)$ and $Z(S)$, given that $|S| \geq \eta\kappa$.
	    
			\begin{lemma}
			    \label{lemma:chain-growth-message-losses-lemma}
				Suppose that at round $r$, the chains in $C_i$ have size $l$. Then by round $s\geq r$, an expected number of $E[n_{alert}] = (1-s)(n-t)$ parties will have adapted a chain of length at least $l + \sum_{i = r}^{s-1} X^*_i$.
			\end{lemma}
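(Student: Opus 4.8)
The plan is to proceed by induction on $s - r \geq 0$, following the structure of the chain-growth lemma of \cite{GKL15} (Lemma 7 there), but replacing the variable counting successful rounds by $X^*_i$ and leaning on Lemma~\ref{lemma:longest-chain-lemma} for the bookkeeping of how many parties hold a longest chain. It is convenient to split the statement into two claims: (i) a \emph{length} claim, that the chains in $C_s$ satisfy $\textsl{len}(C_s) \geq l + \sum_{i=r}^{s-1} X^*_i$, and (ii) a \emph{population} claim, that at round $s$ an expected $E[n_{alert}]$ parties $j$ have $L_s^j \in C_s$. Claim (ii) is exactly the content of Lemma~\ref{lemma:longest-chain-lemma} applied at round $s$, so the real work is in claim (i); combining the two then yields that the expected $E[n_{alert}]$ parties holding a longest chain indeed hold a chain of length at least $l + \sum_{i=r}^{s-1} X^*_i$.

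For claim (i) I would argue by induction, the base case $s = r$ being immediate since the empty sum gives $\textsl{len}(C_r) = l$. For the inductive step, assume $\textsl{len}(C_s) \geq l + \sum_{i=r}^{s-1} X^*_i$ and distinguish two cases at round $s$. If $X^*_s = 1$, some honest party $j$ with $L_s^j \in C_s$ solves a PoW and thereby produces a chain of length $\textsl{len}(C_s)+1$; since this chain has been created and chains persist, the longest chains at the next round satisfy $\textsl{len}(C_{s+1}) \geq \textsl{len}(C_s) + 1 \geq l + \sum_{i=r}^{s} X^*_i$. If $X^*_s = 0$, no party holding a longest chain extends it, and because previously created chains never disappear the maximum chain length cannot decrease, so $\textsl{len}(C_{s+1}) \geq \textsl{len}(C_s) \geq l + \sum_{i=r}^{s} X^*_i$, the last equality using $X^*_s = 0$. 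Note that withheld or late adversarial blocks never threaten a \emph{lower} bound on chain length, so they can be ignored throughout.

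The step I expect to be the main obstacle is reconciling the message-loss semantics of $M(q,0,0)$ with the definition of $X^*_i$ and with the expectation statement. Because sleepy parties do not receive diffused chains, a chain extended in a successful round propagates only to alert parties, so the induction must be phrased over the set $C_s$ of currently longest chains (and the parties adopting them) rather than over a single fixed honest party as in \cite{GKL15}; this is precisely why $X^*_i$ is restricted to parties $j$ with $L_i^j \in C_i$, and why Lemma~\ref{lemma:longest-chain-lemma} is needed to certify that the population adopting a longest chain stays at its expected value $E[n_{alert}]$ round after round. Care is also required because the conclusion concerns an \emph{expected} number of parties: rather than tracking one party, I would invoke Lemma~\ref{lemma:longest-chain-lemma} at the target round $s$ and then argue that each of those parties, having adopted a chain in $C_s$, inherits the length bound from claim (i).
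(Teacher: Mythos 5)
Your proposal is correct and matches the paper's own argument: the paper likewise invokes Lemma~\ref{lemma:longest-chain-lemma} to certify that an expected $E[n_{alert}]$ parties hold a chain in $C_i$ at every round, and then observes that the length bound follows by counting the rounds in which one of these longest chains is extended, i.e.\ exactly $\sum_{i=r}^{s-1} X^*_i$. Your write-up simply makes explicit the induction that the paper leaves implicit in that counting step.
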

			
			\begin{proof}
				By Lemma \ref{lemma:longest-chain-lemma}, for every round $i$, the expected number of parties $j$ with $L_i^j \in C_i$ is $E[n_{alert}]$. Therefore, we only have to count the number of times, when one of these longest chains gets extended.
			\qed\end{proof}
			
			In the following, we define a new variable $\phi$ and provide an upper bound for it. This is required for the proof of the common prefix property. Although the proven bound is not tight, it is sufficient for proving the desired properties.
			
			
			\begin{lemma}
			\label{lemma:small-chance-lemma}
				The probability that the honest parties $j$ with $L_i^j \notin C_i$ can create a new chain $C' \in C_r$ for some round $r \geq i$, before any chain from $C_i$ gets extended is denoted by $\phi$. It holds that: 
			                    $$\phi \leq \frac{s}{1-s} $$
			\end{lemma}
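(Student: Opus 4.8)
The plan is to read $\phi$ as the winning probability of a \emph{race} between two groups of alert parties and to bound it by the ratio of their mining rates. Fix a round $i$ and let $\ell$ be the common length of the chains in $C_i$. Every honest party $j$ with $L_i^j \notin C_i$ has a local chain of length at most $\ell-1$. For such a party to produce a chain $C' \in C_r$ for some $r \ge i$ \emph{before} any chain of $C_i$ is extended, the current maximum length must still be $\ell$ at the moment $C'$ reaches length $\ell$; hence the behind party must make up its entire deficit using only rounds in which no party on a longest chain mines a block. Since a deficit of $d \ge 1$ requires at least $d$ such behind-blocks inside this window, the scenario most favourable to the behind parties — and therefore the one maximising $\phi$ — is a deficit of exactly one block. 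I would thus upper-bound $\phi$ by the probability that, starting at round $i$, some alert party with $L_i^j \notin C_i$ mines a block strictly before any alert party with $L_i^j \in C_i$ does.

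Next I would make this race precise. Conditioning on the two group sizes in a given round, let $q_a$ be the per-round probability that at least one ahead party (those with $L_i^j \in C_i$) mines and $q_b$ the corresponding probability for the behind parties. A round in which an ahead party also mines extends a chain of $C_i$ and so ends the race in the ahead group's favour, i.e. ties count against the behind parties; summing the resulting geometric series gives a winning probability of the form $\frac{q_b(1-q_a)}{q_a+q_b(1-q_a)} \le \frac{q_b}{q_a}$. It therefore suffices to control the ratio $q_b/q_a$.

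Finally I would evaluate this ratio through the expected sizes of the two groups. By Lemma \ref{lemma:longest-chain-lemma} the expected number of parties on a longest chain is $E[n_{alert}]=(1-s)(n-t)$, of which a fraction $(1-s)$ is again alert, giving the ahead-alert expectation $E[n^*_{alert}]=(1-s)^2(n-t)$; the remaining alert parties, those working on shorter chains, number $E[n_{alert}]-E[n^*_{alert}] = s(1-s)(n-t)=(1-s)E[n_{sleepy}]$ in expectation. Using the success-probability estimates of Lemma \ref{lemma:successful-rounds-lemma} to compare $q_b$ and $q_a$ with their group sizes, the ratio collapses to $\frac{q_b}{q_a} \le \frac{s(1-s)(n-t)}{(1-s)^2(n-t)} = \frac{s}{1-s}$, which is the claimed bound; note that the tighter estimate $\frac{q_b}{q_a+q_b}\approx s$ would be available here, so the use of $\frac{q_b}{q_a}$ is exactly what makes the bound loose, consistent with the remark preceding the statement.

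The main obstacle is that the partition into ahead and behind parties, and hence the rates $q_a,q_b$, is re-randomised every round as parties fall asleep and wake up, so the race is not a clean contest between two fixed processes. I would handle this round by round: conditioned on the race being undecided at the start of a round, the conditional probability that the behind group resolves it in its favour that round is at most $\frac{s}{1-s}$ times the probability that the ahead group does, uniformly over the realised sizes, after which the geometric summation goes through. The second delicate point is the pointwise slack in $\frac{q_b}{q_a}$: since $q_b\le pq\,b$ while $q_a\ge \frac{pq\,a}{1+pq\,a}$ for realised counts $a,b$, one picks up a multiplicative factor $1+pq\,E[n^*_{alert}]$ and a random $b/a$, which must be absorbed into the (deliberately non-tight) estimate rather than yielding the ratio exactly. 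Justifying that a deficit larger than one only lowers the success probability is the last step, handled by observing that each extra required block contributes a further factor bounded by $q_b<1$.
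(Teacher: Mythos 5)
Your proposal is correct and shares the paper's basic strategy --- read $\phi$ as a race between the alert parties on a longest chain and the alert parties behind, sum a geometric series over rounds, and compare the two mining rates via $b/a = s/(1-s)$, where $a = pqE[n^*_{alert}]$ and $b = pq(1-s)E[n_{sleepy}]$ --- but it differs in two substantive ways, both to your advantage. First, the paper reduces (WLOG, deficit one) to the event that the behind parties solve \emph{two} PoWs before the ahead parties solve one (this is the form actually invoked in case 4 of the proof of Lemma~\ref{lemma:block-replacement-lemma}), and evaluates the sum $\sum_{k\ge 2}(k-1)E[\tilde X_i]^2(1-E[\tilde X_i])^{k-2}(1-E[X^*_i])^k$, ending in the opaque algebraic inequality $ab \le (1+a)^2(1-ab)^2(a+ab+b)^2$. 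You instead race to a single block (a tie), which is the literal reading of ``$C' \in C_r$'' and whose probability dominates the two-block event; your bound is therefore stronger, serves both readings of the statement, and your algebra collapses to $\frac{b}{a+b}=s\le\frac{s}{1-s}$ instead of the paper's computation. Second, you explicitly address the per-round re-randomization of group membership by conditioning round by round, a point the paper glosses over by working directly with the expectations $E[X^*_i]$ and $E[\tilde X_i]$.

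One step in your write-up must be made precise, because as literally stated it would break: bounding the race probability by $q_b/q_a$ and then inserting $q_b \le b$ and $q_a \ge \frac{a}{1+a}$ gives $\frac{s}{1-s}(1+a) > \frac{s}{1-s}$, and this extra factor cannot simply be ``absorbed'' into the target bound. The fix you hint at is the right one and should be the actual argument: either keep the factor $(1-q_a)$ in the numerator, since $q_b(1-q_a) \le b\bigl(1-\frac{a}{1+a}\bigr) = \frac{b}{1+a}$, whence $\frac{q_b(1-q_a)}{q_a} \le \frac{b}{1+a}\cdot\frac{1+a}{a} = \frac{b}{a} = \frac{s}{1-s}$; or use monotonicity of the full race expression $\frac{q_b(1-q_a)}{q_a+q_b(1-q_a)}$ (increasing in $q_b$, decreasing in $q_a$) to obtain the sharper bound $\frac{b}{a+b}=s$. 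With that replacement your proof is complete and in fact tighter than the paper's.
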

			
			\begin{proof}
				Without loss of generality, we may assume that all parties $j$ with $L_i^j \notin C_i$ have the same local chain. Further, we can assume that this chain is just one block shorter than the currently longest chain. Thus, we search an upper bound for the probability that the parties $\{P_j\}_{L_i^j \notin C_i}$ are faster in solving two PoW's than the parties $\{P_j\}_{L_i^j \in C_i}$ solving one PoW.
				
				In order to prove that, we have to introduce a new random variable $\tilde{X_i}$, with  $\tilde{X_i} = 1$ if some honest party $j$ with $L_i^j \notin C_i$ solves a PoW. By the same argumentation as in Lemma \ref{lemma:successful-rounds-lemma}, we can argue that $ \frac{pq(1-s)E[n_{sleepy}]}{1 + pq(1-s)E[n_{sleepy}]} \leq E[\tilde{X_i}] \leq pq(1-s)E[n_{sleepy}] $. Therefore, the upper bound on the required probability is:
				
				\begin{align*}
			 &\sum_{k=2}^{\infty} (k-1) E[\tilde{X_i}]^2 (1-E[\tilde{X_i}])^{k-2} (1-E[X^*_i])^{k} \\
			= &\frac{E[\tilde{X_i}]^2}{(1-E[\tilde{X_i}])^2} \cdot \sum_{k=2}^{\infty} (k-1)\big((1-E[\tilde{X_i}])(1-E[X^*_i])\big)^k \\
			= & \frac{E[\tilde{X_i}]^2(1-E[X^*_i])^2}{(E[\tilde{X_i}] + E[X^*_i] - E[\tilde{X_i}]E[X^*_i])^2}
				\end{align*}
                
				Now, let $a:= pq(1-s)^2(n-t) = pq E[n^*_{alert}]$ and $b:= pqs(1-s)(n-t) = pq(1-s)E[n_{sleepy}] $. Then by the Definition of $E[\tilde{X_i}]$ and $E[X^*_i]$ holds:
				
				\begin{align*}
					\frac{E[\tilde{X_i}]^2(1-E[X^*_i])^2}{(E[\tilde{X_i}] + E[X^*_i] - E[\tilde{X_i}]E[X^*_i])^2} 
					=& \frac{b^2}{(1+a)^2(1-ab)^2(a + ab +b)^2} \\
				\end{align*}
				
				Thus, $\phi \leq \frac{s}{1-s}$ is equivalent to:
				
				\begin{align*}
				        \frac{b^2}{(1+a)^2(1-ab)^2(a + ab +b)^2} \leq \frac{s}{1-s} \\
		\Leftrightarrow ab         \leq (1+a)^2(1-ab)^2(a + ab +b)^2
				\end{align*}
				
				The inequality holds, since $(1+a)^2(1-ab)^2  \geq 1$ and $ab \leq (a + ab +b)^2$.  
			\qed\end{proof}

			The lemma below replaces Lemma 15 (of \cite{avarikioti2019bitcoin}). The possibility to have chains of different length at the same round offers various ways to replace a block from round $i$, where $Y^*_i = 1$. Thus, we cannot use the same arguments as in Lemma 15 (of \cite{avarikioti2019bitcoin}).
			
			
			\begin{restatable}{lemma}{blockreplacement}
			\label{lemma:block-replacement-lemma}
			    Suppose the k$^{th}$ block $B$ of a chain $C$ was computed at round $i$, where $Y^*_i =1$. Then with probability at least $1-\phi$, the $k^{th}$ block in a chain $C'$ will be B or requires at least one adversarial block to replace $B$.
			\end{restatable}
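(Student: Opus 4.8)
The plan is to mirror the no-message-loss analogue (Garay et al.'s Lemma 6, which the present statement replaces) and to insert the failure probability $\phi$ precisely where message loss destroys the invariant that all alert parties share the current longest chain. First I would record what $Y^*_i = 1$ provides: a single honest party $j$ with $L_i^j \in C_i$ solves a PoW at round $i$, and by the definition of $Y^*_i$ the resulting chain is the \emph{unique} longest chain. Hence $B$ sits at height $k$, before round $i$ no chain of length $k$ existed, and immediately after round $i$ the unique longest chain has length $k$ and ends in $B$; here I would invoke Lemma \ref{lemma:longest-chain-lemma} to guarantee that once a chain of length $k$ appears, the longest chains never drop below $k$ again.

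The core of the argument is to classify every honest block that could occupy height $k$ while differing from $B$. Before round $i$ such a block is impossible: an honest height-$k$ block requires extending a chain of length $k-1$, but any length-$(k-1)$ chain existing before round $i$ is itself a longest chain, so extending it would produce a length-$k$ chain prior to round $i$, contradicting that the chains in $C_i$ have length $k-1$. At round $i$ itself the only longest-chain party solving a PoW is $j$ (by the uniqueness in $Y^*_i = 1$), while every behind party ($L_i^j \notin C_i$) sits on a chain of length at most $k-2$ and so cannot reach height $k$. After round $i$ every party that adopts the unique longest chain works at height at least $k+1$; thus the only remaining source of a competing honest height-$k$ block is a behind party extending some other length-$(k-1)$ chain, and such a block threatens the claim only if its chain actually becomes a longest (hence adoptable) chain.

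It then remains to bound the probability that such a behind party produces a height-$k$ block that genuinely competes with $B$, i.e.\ that the parties whose local chain lies outside $C_i$ build a new longest chain before the chain carrying $B$ is extended. This is exactly the event measured by $\phi$ in Lemma \ref{lemma:small-chance-lemma}, so it occurs with probability at most $\phi$. Consequently, with probability at least $1 - \phi$ no competing honest height-$k$ block exists, and every chain $C'$ whose $k$-th block differs from $B$ must contain at least one adversarial block in order to replace $B$, which is the claim.

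The step I expect to be the main obstacle is the last one: matching the informal ``behind party catches up'' event to the exact event bounded by $\phi$. I would need to argue that the worst case relevant here --- behind parties one block short of the $B$-chain, racing to overtake it --- coincides with the worst case used to derive $\phi \le \frac{s}{1-s}$, and to ensure the accounting neither double-counts $B$ itself nor mishandles the length-$(k-1)$ chains that fall behind only after round $i$. By contrast, the earlier cases (before and at round $i$) are routine consequences of Lemma \ref{lemma:longest-chain-lemma} and the uniqueness encoded in $Y^*_i = 1$.
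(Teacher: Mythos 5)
Your argument is correct and matches the paper's proof in substance: the paper simply enumerates four replacement scenarios---three in which the adversary contributes a block (a precomputed block replacing $B$ directly, the adversary extending a chain built by parties with $L_i^j \notin C_i$, or the adversary forking on its own) and a fourth, pure-honest catch-up by the parties outside $C_i$ before $C$ is extended, which it bounds by $\phi$ via Lemma \ref{lemma:small-chance-lemma}, exactly the reduction you perform. Your classification by the origin of the competing height-$k$ block is just a different packaging of the same case split, provided your phrase ``its chain actually becomes a longest chain'' is read, as your ``genuinely competes'' and final sentence intend, as ``becomes a longest chain without adversarial contribution,'' since a behind-party chain completed with the adversary's help already contains the required adversarial block.
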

			
					
					
				
			
			As in the previous sections, the properties of the typical execution hold and executions are typical with high probability, by Lemma \ref{theorem:typical-execution-theorem}.
			
				
			
			
				
			
			Since we allow message losses in this model, we require more unique successful rounds than in other models. This leads to a different bound in part e) of the following lemma.
			
			
			\begin{restatable}{lemma}{typicalexecutionc}
				\label{lemma:typical-execution-message-losses-lemma}
				The following hold for any set $S$ of at least $\eta\kappa$ consecutive rounds in a typical execution.
				
				\begin{itemize}
					\item[a)] $(1-\epsilon)E[X^*_i]|S| < X^*(S)$
					\item[b)] $(1-\epsilon)E[X^*_i](1-E[X^*_i])|S| < Y^*(S)$
					\item[c)] $Z(S) < (1+\epsilon)\frac{t}{E[n^*_{alert}]}\frac{E[X^*_i]}{1-E[X^*_i]}|S| < (1+\epsilon)\big(c(1-\delta)-\frac{s}{1-s}\big)\frac{E[X^*_i]}{1-E[X^*_i]}|S|$
					\item[d)] For $\sigma^* = (1-\epsilon)(1-E[X^*_i])$:
					
					$$Z(S) < \Big(1 + \frac{\delta}{\sigma^*}\Big) \frac{t}{E[n^*_{alert}]} X^*(S) \leq c\Big(1-\frac{\delta^2}{2\sigma^*}\Big) X^*(S)$$
					\item[e)] $$Z(S) < Y^*(S)(1-\epsilon)(1-\phi)$$
				\end{itemize}
			\end{restatable}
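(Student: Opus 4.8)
The plan is to follow the structure of the synchronous Lemma~\ref{lemma:typical-execution-lemma}, replacing the random variables $X_i, Y_i, n_{alert}$ by their starred counterparts $X^*_i, Y^*_i, n^*_{alert}$, and then to isolate the two genuinely new ingredients: the tightened chain-quality budget coming from the modified majority assumption of Section~\ref{dishonest-messageloss}, and the $(1-\phi)$ loss in part (e) coming from Lemma~\ref{lemma:block-replacement-lemma}.

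First I would dispatch (a), (b) and the first inequality of (c) by direct appeals to the concentration clause of Definition~\ref{definition:typical-execution-definition}. Each of $X^*(S), Y^*(S), Z(S)$ is a sum over $|S|$ rounds of an identically distributed indicator, so $E[X^*(S)] = |S|\,E[X^*_i]$, and likewise for the others. Applying the two-sided typical-execution bound to $X^*(S)$ gives (a); applying it to $Y^*(S)$ together with the established bound $E[Y^*_i] \ge E[X^*_i](1-E[X^*_i])$ gives (b); and applying it to $Z(S)$, using $E[Z_i]=pqt$ and the rearrangement $pq\,E[n^*_{alert}] \le \tfrac{E[X^*_i]}{1-E[X^*_i]}$ of the lower bound on $E[X^*_i]$, yields the first inequality of (c).

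For the second inequality of (c) I would feed in the majority assumption $t+(1-s)E[n_{sleepy}] \le c(1-\delta)E[n^*_{alert}]$. Dividing through by $E[n^*_{alert}]=(1-s)^2(n-t)$ and using the identity $\tfrac{(1-s)E[n_{sleepy}]}{E[n^*_{alert}]}=\tfrac{s}{1-s}$ gives $\tfrac{t}{E[n^*_{alert}]} \le c(1-\delta)-\tfrac{s}{1-s}$, which is exactly the factor that appears. Part (d) is then identical in shape to the synchronous case: I would combine the first inequality of (c) with the lower bound (a), replacing $E[X^*_i]|S|$ by $X^*(S)/(1-\epsilon)$, to reach $Z(S) < \tfrac{1+\epsilon}{\sigma^*}\tfrac{t}{E[n^*_{alert}]}X^*(S)$, and then check $1+\epsilon \le \sigma^*+\delta$ for the first inequality and $\delta \ge 2(1-\sigma^*)$ for the second; both reduce to $\delta \ge 2\epsilon+E[X^*_i]$, which is implied by the standing assumption $\delta \ge 3\epsilon+2E[X^*_i]$.

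The main work is part (e), $Z(S) < (1-\epsilon)(1-\phi)Y^*(S)$. Here the extra factor $(1-\phi)$ records that, by Lemma~\ref{lemma:block-replacement-lemma}, only a $(1-\phi)$-fraction of the unique-successful-round blocks are guaranteed to require an adversarial block to displace, where $\phi \le \tfrac{s}{1-s}$ by Lemma~\ref{lemma:small-chance-lemma}. I would bound $Z(S)$ from above by the second inequality of (c) and $Y^*(S)$ from below by (b), which reduces the claim to the single inequality $(1+\epsilon)\big(c(1-\delta)-\tfrac{s}{1-s}\big) \le (1-\epsilon)^2(1-\phi)(1-E[X^*_i])^2$. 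Substituting $\phi \le \tfrac{s}{1-s}$ makes the right side at least $(1-\epsilon)^2(1-\tfrac{s}{1-s})(1-E[X^*_i])^2$, after which both sides are affine and decreasing in $\tfrac{s}{1-s}$; since the coefficient $(1+\epsilon)$ on the left dominates $(1-\epsilon)^2(1-E[X^*_i])^2$ on the right, the difference (left minus right) only decreases as $\tfrac{s}{1-s}$ grows, so it suffices to verify the inequality at $s=0$. There it collapses, using $c\le 1$, to $(1+\epsilon)(1-\delta) \le (1-\epsilon)^2(1-E[X^*_i])^2$, and expanding both squares shows this holds precisely when $\delta \ge 3\epsilon+2E[X^*_i]$ -- which is exactly why this model demands the extra $\epsilon$ relative to the synchronous one, the second $(1-\epsilon)$ in the statement of (e) costing one further $\epsilon$ in $\delta$. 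The delicate point I expect to be the main obstacle is this monotonicity-in-$\tfrac{s}{1-s}$ reduction, keeping the uses of $c\le 1$ and $\phi\le\tfrac{s}{1-s}$ consistent so that the worst case really is $s=0$ rather than some interior value.
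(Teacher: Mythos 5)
Your proposal is correct and follows essentially the same route as the paper's proof: parts (a)--(d) are handled identically (typical-execution concentration, the rearranged lower bound on $E[X^*_i]$, the majority assumption divided through by $E[n^*_{alert}]$, and the two checks $1+\epsilon \le \sigma^*+\delta$ and $1-\sigma^* \le \delta/2$), and for (e) you reduce---via (b), the second inequality of (c), and Lemma~\ref{lemma:small-chance-lemma}---to exactly the key inequality $(1+\epsilon)\bigl(c(1-\delta)-\tfrac{s}{1-s}\bigr) \le (1-\epsilon)^2(1-\phi)(1-E[X^*_i])^2$ that the paper verifies. Your monotonicity-in-$\tfrac{s}{1-s}$ reduction to the case $s=0$ is merely a cleaner organization of the same algebra the paper compresses into its implication chain starting from $3\epsilon+2E[X^*_i]+\phi-\tfrac{s}{1-s}\le\delta$.
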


Next, we prove Bitcoin is secure in the synchronous model with message loss. The proof can be found in the full version.

\section{Security Analysis Results}

    As a result of the temporary dishonest majority assumptions, we have derived upper bounds for the probability $s$ as shown in Figure \ref{figure:combined}. Therefore, we fixed $c=0.5$ to limit the advantage of an adversary, following a Selfish Mining strategy. Further, we have chosen for all three models $\epsilon = 0.005$. For the synchronous model without message losses, we set $E[X_i] = 0.03$, which results in $\delta = 0.07$\footnote{Note that $\delta$ is dependent on $E[X_i]$, which is again dependent on $s$. If we would remove this dependency, the results would be at most 2\% better than the actual results shown in Figure \ref{figure:combined}.}.
    For the Semi-Synchronous model, we set also $E[X_i] = 0.03$, resulting in $E[X'_i] = 0.022$. For $\Delta = 10$, we then get $\delta = 0.46$. And for the synchronous model with message losses, we have chosen $E[X^*_i] = 0.03$, which results in $\delta = 0.075$.

    \begin{figure}
    \centering
    \label{figure:combined}
        \includegraphics[scale=0.6]{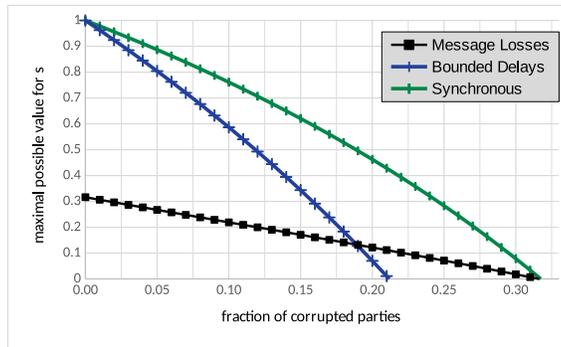}
        \caption{This figure shows the upper bound on the fraction of sleepy parties, depending on the fraction of corrupted parties.} 
    \end{figure}

    One might be wondering how we could allow such high values for $s$. We have fixed $E[X_i]$, respectively $E[X^*_i]$, for our calculations. We can do this without loss of generality, since these expected values are dependent on $p$, which depends on the difficulty parameter $T$. The adjustment of $T$, used to regulate the block generation rate, depends on the fraction of sleepy parties, because they do not provide computational power (e.g. new blocks) to the blockchain.
    
    These results are also consistent with the results from \cite{DW13}, where the upper bound on the adversarial fraction is stated at $49.1\%$. If we set $c=1$ and $s=0$, due the value of $\delta$, we get an maximal possible adversarial fraction of $48.5\%$.
	
\section{Related Work}
    
    To model temporary dishonest majority in Bitcoin we used an idea, originally introduced by  Pass and  Shi \cite{PS17}. In this work, they introduced the notion of sleepy nodes, i.e. nodes that go offline during the execution of the protocol, and presented a provably secure consensus protocol. In this paper, we model the dynamic nature of the system by additionally allowing the adversary to set parties to sleep, thus enabling temporary dishonest majority.
    
    Bitcoin has been studied from various aspects and multiple attacks have been proposed, concerning the network layer \cite{HKZG15eclipse,Nayak2015StubbornMG,AZV17} as well as the consensus algorithm (mining attacks) \cite{ES13,SSZ16,Eyal15dillema,KKSV17FAW}. The most famous mining attack is selfish mining \cite{ES13}, where a selfish miner can withhold blocks and gain disproportionate revenue compared to his mining power. The chain quality property, originally introduced in \cite{GKL15}, encapsulates this ratio between the mining power and the final percentage of blocks, and thus rewards, the adversary owns. 
    On the other hand, Heilman et al.~\cite{HKZG15eclipse} examined eclipse attacks on the Bitcoin's peer-to-peer network. In turn, Nayak et al.~\cite{Nayak2015StubbornMG} presented a novel attack combining selfish mining and eclipse attacks. They showed that in some adversarial strategies the victims of an eclipse attack can actually benefit from being eclipsed. Our last model, where offline parties do not get the update messages, captures this attack.
%
%
\section{Conclusion \& Future Work}
    In this paper, we prove Bitcoin is secure under temporary dishonest majority.
    Specifically, we extended the framework of Garay et al.~\cite{GKL15} to incorporate offline nodes and allow the adversary to introduce crush failures. This way we can relax the honest majority assumption and allow temporary dishonest majority. We prove Bitcoin' s security by showing that under an expected honest majority assumption the following security properties hold: chain growth, common prefix and chain quality. 
    
    We examine three models: the synchronous model, the bounded delay model and the synchronous model with message loss. The first two models result in similar bounds regarding the fractions of corrupted and sleepy parties. In contrast, the last model that allows message losses when a party goes offline is less resilient to sleepy behavior. This is expected since this model captures the nature of eclipse attacks where the adversary can hide part of the network form an honest party and either waste or use to his advantage the honest party's mining power. We illustrate in Figure \ref{figure:combined} the upper bounds on the fraction of sleepy parties depending on the fraction of corrupted parties for all three models.
    
    For future work, we did not consider the bounded delay with message loss model. We expect the difference on the results from synchronous to bounded delay model to be similar to the model without message loss. Another interesting future direction is to consider a more powerful adversary, who knows whether his attempt to set a party to sleep is successful or not.  
    
    

\section{Acknowledgments}
We thank Dionysis Zindros for the helpful and productive discussions. Y.\ W.\ is partially supported by X-Order Lab.

%
%
\bibliographystyle{splncs04}
\bibliography{references}

\end{document}